\documentclass[letterpaper,journal,draftclsnofoot,onecolumn]{IEEEtranBB}   


\pdfminorversion=5   

\usepackage[T1]{fontenc}
\usepackage{graphicx}
	\graphicspath{{figs/}{matlab/}}   
\usepackage[cmex10]{amsmath}
\usepackage{amssymb, amsthm, gensymb}  
\usepackage{booktabs}  
\usepackage{cellspace}
	\cellspacebottomlimit=3pt   
    \cellspacetoplimit=3pt
\usepackage{float}
\usepackage{flafter}  
\usepackage{accents}
\usepackage{cases}  
\usepackage[usenames,dvipsnames]{color}

\usepackage{calc}


\newcommand{\executeiffilenewer}[3]{%
\ifnum\pdfstrcmp{\pdffilemoddate{#1}}%
{\pdffilemoddate{#2}}>0%
{\immediate\write18{#3}}\fi%
}
\newcommand{\includesvg}[1]{%
\executeiffilenewer{#1.svg}{#1.pdf}%
{inkscape -z -D --file=#1.svg %
--export-pdf=#1.pdf --export-latex}%
\ifx\svgscale\undefined%
	\newcommand{\svgscale}{\defaultsvgscale}
\fi%
\input{#1.pdf_tex}%
}

	\newcommand{\defaultsvgscale}{1.0}  

\usepackage[tight,small,bf]{subfigure}   

\usepackage{microtype}


\usepackage{xspace}
\usepackage{bbm}
\usepackage{mathrsfs}
\input{mathlig}

\newcommand{\muspace}{\mspace{1mu}}

\DeclareRobustCommand{\scond}{\mathchoice{\muspace\vert\muspace}{\vert}{\vert}{\vert}}
\mathlig{|}{\scond}
\newcommand{\cond}{\mathchoice{\,\vert\,}{\mspace{2mu}\vert\mspace{2mu}}{\vert}{\vert}}

\DeclareRobustCommand{\discint}{\mathchoice{\mspace{-1.5mu}:\mspace{-1.5mu}}{\mspace{-1.5mu}:\mspace{-1.5mu}}{:}{:}}
\mathlig{::}{\discint}
\newcommand{\suchthat}{\mathchoice{\colon}{\colon}{:\mspace{1mu}}{:}}

%
%
%
%
%
%
%
%
%
%
%
%
%


\newcommand{\Cc}{\mathcal{C}}

\newcommand{\Dc}{\mathcal{D}}
\newcommand{\Ec}{\mathcal{E}}

\newcommand{\Lc}{\mathcal{L}}

\newcommand{\Qc}{\mathcal{Q}}

\newcommand{\Sc}{\mathcal{S}}
\newcommand{\Tc}{\mathcal{T}}
\newcommand{\Uc}{\mathcal{U}}
\newcommand{\Vc}{\mathcal{V}}

\newcommand{\Xc}{\mathcal{X}}
\newcommand{\Yc}{\mathcal{Y}}



\newcommand{\Cr}{\mathscr{C}}

\newcommand{\Rr}{\mathscr{R}}

\newcommand{\pen}{{P_e^{(n)}}}

\newcommand{\aep}{{\mathcal{T}_{\epsilon}^{(n)}}}


\newcommand{\Mh}{{\hat{M}}}

\newcommand{\mh}{{\hat{m}}}



\def\e{\epsilon}
\def\eps{\epsilon}

\DeclareMathOperator\E{\textsf{E}}
\let\P\relax
\DeclareMathOperator\P{\textsf{P}}



\def\error{\mathrm{e}}


\newcommand{\N}{\mathrm{N}}
\newcommand{\U}{\mathrm{Unif}}


\def\textiid{i.i.d.\@\xspace}
\newcommand\iid{\ifmmode\text{ i.i.d. } \else \textiid \fi}



\newcommand{\sfrac}[2]{\mbox{\small$\displaystyle\frac{#1}{#2}$}}



\def\mathllap{\mathpalette\mathllapinternal}
\def\mathllapinternal#1#2{%
  \llap{$\mathsurround=0pt#1{#2}$}}

\def\clap#1{\hbox to 0pt{\hss#1\hss}}
\def\mathclap{\mathpalette\mathclapinternal}
\def\mathclapinternal#1#2{%
  \clap{$\mathsurround=0pt#1{#2}$}}




\let\oldstackrel\stackrel
\renewcommand{\stackrel}[2]{\oldstackrel{\mathclap{#1}}{#2}}




\renewcommand{\hbar}{h\mathllap{\overline{\vphantom{h}\hphantom{\rule{4.6pt}{0pt}}}\mspace{0.77mu}}}

\catcode`~=11 
\newcommand{\urltilde}{\kern -.06em\lower -.06em\hbox{~}\kern .02em}
\catcode`~=13 

\hyphenation{Gauss-ian}
\hyphenation{qua-dra-tic}
\hyphenation{Vis-wa-nath}
\hyphenation{non-trivial}
\hyphenation{multi-letter}
\hyphenation{Gauss-ian}
\hyphenation{super-posi-tion}
\hyphenation{de-cod-er}
\hyphenation{Nara-yan}
\hyphenation{multi-message}
\hyphenation{Dimi-tris}
\hyphenation{Pol-ty-rev}
\hyphenation{multi-cast}
\hyphenation{multi-user}
\hyphenation{multi-plex-ing}
\hyphenation{bi-directional}
\hyphenation{comput}

\usepackage{caption}
\DeclareCaptionLabelSeparator{periodquad}{. \quad}
\captionsetup{margin=0pt, skip=3mm, font=small, labelfont=bf, labelsep=periodquad}  

\ifCLASSOPTIONonecolumn
	\ifCLASSOPTIONtrimmargin
		\usepackage{anysize}
		\papersize{9.7in}{5.9in}  
		\marginsize{0.2in}{0.2in}{0.1in}{0.1in} 
	\else
	\fi
\fi

\usepackage[colorlinks=true,linkcolor=black,anchorcolor=black,citecolor=black,filecolor=black,menucolor=black,urlcolor=black]{hyperref}  
	
\usepackage{cite}  

\renewcommand{\epsilon}{\varepsilon} 
\renewcommand{\eps}{\varepsilon}

\providecommand{\card}[1]{\lvert#1\rvert}

\definecolor{unemphColor}{gray}{0.5}  

\newcommand{\anneq}[1]{\overset{\text{(#1)}}{=}}  
\newcommand{\annleq}[1]{\overset{\text{(#1)}}{\leq}}  



\newcommand{\natSet}[1]{[1::#1]}  
\newcommand{\natSetFromZero}[1]{[0::#1]}  

\renewcommand{\U}{\mathrm{Unif}}

\newcommand{\Er}{\mathcal E}

\newcommand{\compl}[1]{#1^{\mathrm{c}}}

\theoremstyle{definition}  
\newtheorem{thm}{Theorem}
\newtheorem{definition}{Definition}    
\newtheorem{lemma}{Lemma}
\newtheorem{corollary}{Corollary}
\newtheorem{proposition}{Proposition}

\theoremstyle{remark}
\newtheorem{remark}{Remark}
\newtheorem{example}{Example}





\newcommand{\ReduceA}{\mathop{\mathrm{Proj}_{4\to 2}}}

\newcommand{\Scs}{{\Sc^\star}}
\newcommand{\Scn}{{\overline{\Sc}^\star}}

\newcommand{\inpmfspace}{\hspace{0.35ex}}


\title{Optimal Achievable Rates for \\ Interference Networks with Random Codes}
\author{%
Bernd Bandemer, Abbas El Gamal, and Young-Han Kim
\thanks{\hrule \vspace{2mm} \noindent This research was supported in part by the Korea Communications Commission under the R\&D program KCA-2012-11-921-04-001 (ETRI).

\noindent This paper was presented in part at the Allerton Conference on Communication, Control and Computing, October 2012, Monticello, IL.
}
}
\hypersetup{
	pdfauthor = {Bernd Bandemer, Abbas El Gamal, and Young-Han Kim},
	pdftitle = {Optimal Achievable Rates for Interference Networks with Random Codes},
	pdfsubject = {Information theory},
	pdfkeywords = {},
	pdfcreator = {LaTeX}
}

\begin{document}
\maketitle


\begin{abstract}
The optimal rate region for interference networks is characterized when encoding is restricted to random code ensembles with superposition coding and time sharing.
A simple \emph{simultaneous nonunique decoding} rule, under which each receiver decodes for the intended message as well as the interfering messages, is shown to achieve this optimal rate region regardless of the relative strengths of signal, interference, and noise. 
This result implies that the Han--Kobayashi bound, the best known inner bound on the capacity region of the two-user-pair interference channel, cannot be improved merely by using the optimal maximum likelihood decoder.
\end{abstract}

\begin{IEEEkeywords}
\noindent \centering
network information theory,
interference network, 
superposition coding,\\
maximum likelihood decoding,
simultaneous~decoding,
Han--Kobayashi bound.
\end{IEEEkeywords}

\section{Introduction}
Consider a communication scenario in which multiple senders communicate independent messages 
to multiple receivers over a network with interference. What is the set of simultaneously achievable rate tuples for reliable communication? What coding scheme achieves this \emph{capacity region}? Answering these questions involves joint optimization of the encoding and decoding functions, which has remained elusive even for the case of two sender--receiver pairs.

With a complete theory in terra incognita, in this paper we take a simpler modular approach to these questions. Instead of searching for the optimal encoding functions,
suppose rather that the encoding functions are restricted to realizations of a given random code ensemble of a certain structure. What is the set of simultaneously achievable rate tuples so that the probability of decoding error,
when \emph{averaged over the random code ensemble}, can be made arbitrarily small?
To be specific, we focus on random code ensembles with superposition coding 
and time sharing of independent and identically distributed (i.i.d.) codewords.
This class of random code ensembles
includes those used in the celebrated Han--Kobayashi coding
scheme~\cite{HanKobayashi81}.

We characterize the set $\Rr^*$ of rate tuples achievable by the random code ensemble for an interference network
as the intersection of rate regions for its component multiple access channels in which each receiver recovers its intended
messages as well as appropriately chosen unintended messages.
More specifically, the rate region $\Rr^*$ for the
interference network with senders $\natSet K = \{1,2,\ldots,K\}$, each communicating an independent message, and receivers $\natSet L$, each required to recover a subset $\Dc_1, \ldots, \Dc_L \subseteq \natSet K$ of messages,
is
\begin{equation}
\Rr^* = \bigcap_{l \in \natSet L} \,
\bigcup_{\Sc \supseteq \Dc_l} \Rr_{\text{MAC}(\Sc,\,l)}.
\end{equation}
Here $\Rr_{\text{MAC}(\Sc,\,l)}$ denotes the set of rate tuples achievable by the random code ensemble for the \emph{multiple
access channel} with senders $\Sc$ and receiver $l$ when the codewords from the other senders
$\natSet K \setminus \Sc$ are treated as random noise.

A direct approach to proving this result would be to analyze
the average performance of the optimal decoding rule for each realization of the random code ensemble
that minimizes the probability of decoding error, namely, maximum likelihood decoding (MLD).
This analysis, however, is unnecessarily cumbersome.
We instead take an indirect yet conducive approach that is common in information theory.
First, we show that any rate tuple inside $\Rr^*$ is achieved by
using the typicality-based \emph{simultaneous nonunique decoding} (SND) rule~\cite{ChongMotani08,Nair2009,ElGamalKim},
in which each receiver attempts to recover the codewords from its intended senders
and (potentially nonuniquely) the codewords from interfering senders.
Second, we show that if the average probability of error of MLD for the random code ensemble is asymptotically zero,
then its rate tuple must lie in $\Rr^*$.
The key to proving the second step is to show that after a maximal set of messages has been recovered, the remaining signal 
at each receiver is distributed essentially independently and identically.
The two-step approach taken here is reminiscent of
the random coding proof for the capacity of the point-to-point channel~\cite{Shannon1948}, wherein a suboptimal (in the sense of the probability of error) decoding rule based on the notion of joint typicality can achieve the same
rate as MLD when used for random code ensembles.

Our result has several implications.
\begin{itemize}
\item It shows that 
incorporating the structure of interference into decoding,
when properly done as in MLD and SND,
always achieves higher or equal rates
compared to treating interference as random noise; thus, 
the traditional wisdom of distinguishing
between decoding for the interference at high signal-to-noise ratio and ignoring the interference 
at low signal-to-noise ratio does not provide any improvement on achievable rates.

\item It shows that the Han--Kobayashi inner bound~\cite{HanKobayashi81}, \cite{ChongMotani08}, \cite[Theorem 6.4]{ElGamalKim}, which was established using the random code ensemble and a typicality-based 
simultaneous decoding rule, cannot be improved by using a more powerful decoding rule such as MLD.

\item It generalizes the result by Motahari and Khandani~\cite{Motahari2011}, and  Baccelli, El Gamal, and Tse~\cite{Baccelli2011}
on the optimal rate region for $K$-user-pair Gaussian interference channels with point-to-point Gaussian 
random code ensembles to arbitrary (not necessarily Gaussian) random code ensembles
with time sharing and superposition coding.

\item It shows that the Cover--van der Meulen inner bound with no common auxiliary random variable on the capacity region of the two-receiver broadcast channel~\cite{Cover1975}, \cite{vanderMeulen1975},  \cite[Eq.~(8.8)]{ElGamalKim} (and thus Marton's inner bound~\cite{Marton1979}, \cite[Theorem~8.3]{ElGamalKim}) can be improved 
by using SND to include the superposition coding inner bound~\cite{Cover1972}, \cite{Bergmans1973}, 
\cite[Theorem 5.1]{ElGamalKim}.

\item It shows that the interference decoding rate region for the three-user-pair deterministic interference channel in~\cite{Bandemer:2010:InterferenceDecodingJournal} is the optimal rate region achievable by point-to-point random code ensembles and time sharing.
\end{itemize}

We illustrate the main result and its implications via the following two simple examples.

\subsection{Interference Channels with Two User Pairs}  \label{sec:intro_2dmic}
Consider the two-user-pair discrete memoryless interference channel (2-DM-IC) $p(y_1,y_2|x_1,x_2)$ with input alphabets $\Xc_1$ and $\Xc_2$ and output alphabets $\Yc_1$ and $\Yc_2$, depicted in Figure~\ref{fig:2dmic}.
Here sender~$j = 1,2$ wishes to communicate a message to its respective receiver 
via $n$ transmissions over the shared interference channel. Each message $M_j$, $j = 1,2$, is separately encoded into
a codeword $X_j^n = (X_{j1}, X_{j2}, \ldots, X_{jn})$ and transmitted over the channel. Upon receiving the
sequence $Y_j^n$, receiver $j=1,2$ finds an estimate $\Mh_j$ of the message $M_j$.
\begin{figure}[h]
	\centering
	\includesvg{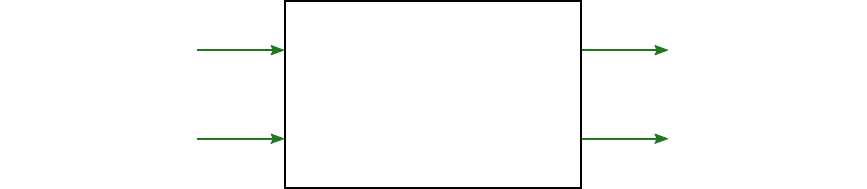}
	\caption{Two-user-pair discrete memoryless interference channel.}
	\label{fig:2dmic}
\end{figure}

We now consider the standard random coding analysis for inner bounds 
on the set of achievable rate pairs (the capacity region) of the 2-DM-IC.
Given a product input pmf $p(x_1)\inpmfspace p(x_2)$, suppose that 
the codewords $x_j^n(m_j)$, $m_j \in \natSet{2^{nR_j}} = \{1,2,\ldots,2^{nR_j}\}$, for $j=1,2$ are generated randomly, 
each drawn according to $\prod_{i=1}^n p_{X_j}(x_{ji})$.

We recall the rate regions achieved by employing the following simple suboptimal decoding rules, 
described for receiver~$1$ (cf.~\cite[Sec.~6.2]{ElGamalKim}).
\begin{itemize}
\item \emph{Treating interference as noise (IAN).} Receiver 1 finds the unique message
$\mh_1$ such that $(x_1^n(\mh_1), y_1^n)$ is jointly typical. (See the end of this section
for the definition of joint typicality.)
It can be shown that the average probability of decoding error for receiver~$1$
tends to zero as $n \to \infty$ if
\begin{equation} 
R_1 < I(X_1; Y_1).    \label{eq:ian-bound}
\end{equation}
The corresponding rate region (IAN region) is depicted in Figure~\ref{fig:2dmic_regionR1_IAN}.

\item \emph{Simultaneous decoding (SD).} Receiver 1 finds the unique message pair
$(\mh_1,\mh_2)$ such that $(x_1^n(\mh_1), x_2^n(\mh_2), y_1^n)$ is jointly typical. The average probability of decoding error for receiver~$1$
tends to zero as $n \to \infty$ if
\begin{subequations}
\begin{align}
R_1 &< I(X_1; Y_1 \cond X_2),\\
R_2 &< I(X_2; Y_1 \cond X_1),\\
R_1 + R_2 &< I(X_1, X_2; Y_1).
\end{align}
\label{eq:sd-bound}%
\end{subequations}
The corresponding rate region (SD region) is depicted in Figure~\ref{fig:2dmic_regionR1_SD}.
\end{itemize}

Now, consider simultaneous nonunique decoding (SND) in which receiver~1 finds the unique
$\mh_1$ such that $(x_1^n(\mh_1), x_2^n(m_2), y_1^n)$ is jointly typical \emph{for some $m_2$}.
Clearly, any rate pair in the SD rate region~\eqref{eq:sd-bound} is achievable via SND.
Less obviously, any rate pair in the IAN region~\eqref{eq:ian-bound} is also achievable via SND
as we show in the achievability proof of Theorem~\ref{thm:2dmic} in Section~\ref{sec:2dmic}.
Hence, SND can achieve any rate pair in
the union of the IAN and SD regions, that is, the rate region $\Rr_1$ as depicted in
Figure~\ref{fig:2dmic_regionR1_simple}. Similarly, the average probability of decoding error for receiver~$2$ using SND tends to zero as $n \to \infty$ if the rate pair $(R_1,R_2)$ is in $\Rr_2$, which is defined analogously by exchanging the roles of the two users (see Figure~\ref{fig:2dmic_regionR2_simple}).
Combining the decoding requirements for both receivers yields the rate region $\Rr_1 \cap \Rr_2$.

This rate region $\Rr_1 \cap \Rr_2$ turns out to be optimal for the given random code ensemble. As
shown in the converse proof of Theorem~\ref{thm:2dmic}, if the probability of error for MLD averaged
over the random code ensemble tends to zero as $n \to \infty$, then the rate pair $(R_1,R_2)$ must reside
inside the closure of $\Rr_1 \cap \Rr_2$. Thus, SND achieves the same rate region as MLD (for random code ensembles of the given structure).

\begin{figure}[h!]
	\vspace*{5mm}
	\begin{tabular}{cc}
	\subfigure[]{
		\includesvg{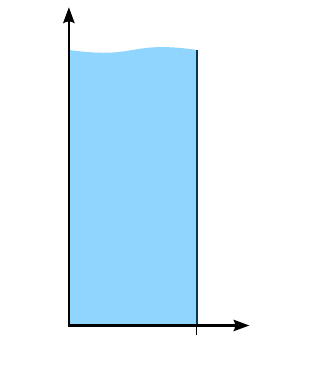}
		\label{fig:2dmic_regionR1_IAN}
	}
	&
	\subfigure[]{
		\includesvg{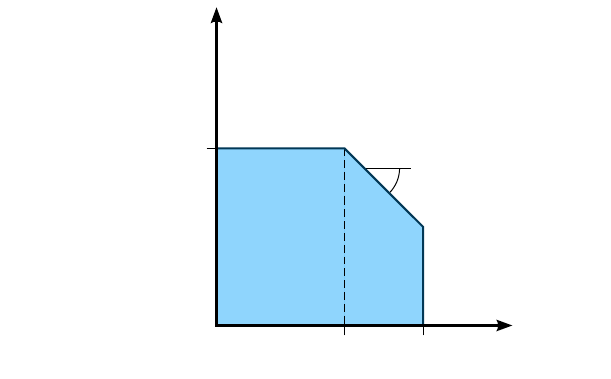}
		\label{fig:2dmic_regionR1_SD}
	} 
	\\[5mm]
	\subfigure[]{
		\includesvg{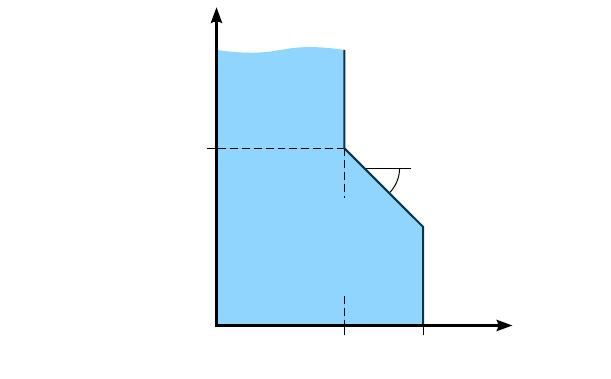}
		\label{fig:2dmic_regionR1_simple}
	}
	&
	\subfigure[]{
		\includesvg{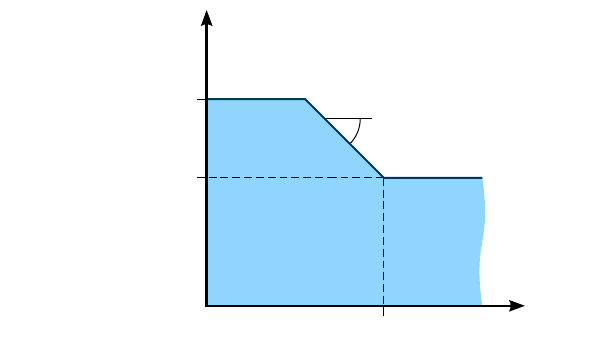}
		\label{fig:2dmic_regionR2_simple}
	}
	\end{tabular}
	\caption{Achievable rate regions for the 2-DM-IC: (a)~treating interference as noise, (b)~using simultaneous decoding, (c)~using simultaneous nonunique decoding ($\Rr_1$); note that $\Rr_1$ is the union of the regions in (a) and (b); and (d)~using simultaneous nonunique decoding at receiver 2 ($\Rr_2$).}
	\label{fig:2dmic_region}
\end{figure}

\subsection{Broadcast Channels with Two Receivers} \label{sec:intro_2bc}
In the previous example, the random code ensemble for each sender 
had the structure of random code ensembles for point-to-point communication channels~\cite{Shannon1948}.
To illustrate our result for superposition coding, consider 
the two-receiver discrete memoryless broadcast channel (2-DM-BC) $p(y_1,y_2|x)$ with input alphabet $\Xc$ and output alphabets $\Yc_1$ and $\Yc_2$. 
Here the sender wishes to communicate two independent messages to their respective receivers 
via $n$ transmissions over the broadcast channel. Each message pair $(M_1, M_2)$ is encoded into
a codeword $X^n$ and transmitted over the channel. Upon receiving the
sequence $Y_j^n$, receiver $j=1,2$ finds an estimate $\Mh_j$ of the message $M_j$.

We consider a special case of the classical coding scheme by Cover~\cite{Cover1975} and van der Meulen~\cite{vanderMeulen1975}, illustrated in
Figure~\ref{fig:HK22dmic}. Given
a product pmf $p(u_1)\inpmfspace p(u_2)$ and a function $x(u_1,u_2)$,
suppose that the codewords $x^n(m_1,m_2)$, $(m_1,m_2) \in \natSet{2^{nR_1}} \times \natSet{2^{nR_2}}$,
are given as $x_i(m_1,m_2) = x(u_{1i}(m_1), u_{2i}(m_2))$, $i \in \natSet n$,
where the sequences $u_j^n(m_j)$, $m_j \in \natSet{2^{nR_j}}$, for $j=1,2$ are generated randomly, 
each drawn according to $\prod_{i=1}^n p_{U_j}(u_{ji})$. Thus, the transmitted codeword is a ``superposition'' 
of two codewords $u_1^n(m_1)$ and $u_2^n(m_2)$, which is literally the case when $x(u_1,u_2)$ is additive.

\begin{figure}[h]
	\centering
	\includesvg{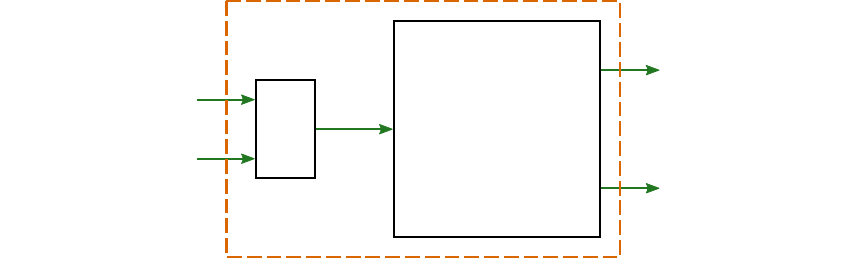}
	\caption{Broadcast channel with Cover--van der Meulen coding.}
	\label{fig:HK22dmic}
\end{figure}

Alternatively, this superposition coding scheme can be viewed as first transforming the underlying
the broadcast channel into a two-user-pair interference channel 
\[
p(y_1,y_2 | u_1,u_2) = p(y_1, y_2 | x(u_1, u_2) )
\]
and then applying the random coding scheme for two-user-pair interference channel discussed in Subsection~\ref{sec:intro_2dmic}.
Hence, the random coding analysis thereof can be readily applied. For example, suppose that each receiver
decodes for its intended codeword while treating the other codeword as noise (cf.~\eqref{eq:ian-bound}).
Then it can be shown that the average probability
of decoding error tends to zero as $n \to \infty$ if 
\begin{subequations}
\label{eq:2BC_cvdm}
\begin{align}
R_1 &< I(U_1; Y_1),\\
R_2 &< I(U_2; Y_2).
\end{align}
\end{subequations}
Taking the union over all pmfs
$p(u_1)\inpmfspace p(u_2)$ and functions $x(u_1,u_2)$, we obtain the Cover--van der Meulen inner bound
(with no
common auxiliary random variable) on the capacity region~\cite[Eq.~(8.8)]{ElGamalKim}.

On the other hand, consider the superposition coding inner bound on the capacity region~\cite{Cover1972}, \cite{Bergmans1973},
\cite[Theorem~5.1]{ElGamalKim}, which is the set of rate pairs such that
\begin{subequations}
\label{eq:2BC_superpos}
\begin{align}
R_1 &< I(U_1; Y_1 | U_2) = I(X; Y_1 | U_2),\\
R_2 &< I(U_2; Y_2),\\
R_1 + R_2 &<  I(U_1,U_2; Y_1) = I(X; Y_1)
\end{align}
\end{subequations}
for some pmf $p(u_1)\inpmfspace p(u_2)$ and function $x(u_1,u_2)$. This inner bound 
corresponds to having receiver~1 decode for both messages while receiver~2 treats the other codeword as noise. It can be shown~\cite{Gohari2010} that this bound is not in general contained in the Cover--van der Meulen inner bound and neither vice versa. (This statement remains true even if the Cover--van der Meulen inner bound is replaced with Marton's inner bound without a common auxiliary random variable~\cite{Marton1979}, \cite[Theorem~8.3]{ElGamalKim}).

The distinction between the superposition coding inner bound and the Cover--van der Meulen inner bound 
is, however, a mere side effect from the use of suboptimal decoding rules.
Suppose now that both receivers use SND. As in Subsection~\ref{sec:intro_2dmic},
the average probability of decoding error tends to
zero as $n \to \infty$ if $(R_1,R_2) \in \Rr_1 \cap \Rr_2$, where $\Rr_1$ consists of rate pairs such that
\begin{align*} 
R_1 &< I(U_1; Y_1)\\
\intertext{or} 
R_1 &< I(U_1; Y_1 \cond U_2),\\
R_1 + R_2 &< I(U_1, U_2; Y_1),
\end{align*}
and $\Rr_2$ is similarly defined by exchanging the subscripts $1$ and $2$. 
The union of $\Rr_1 \cap \Rr_2$ over all pmfs $p(u_1)\inpmfspace p(u_2)$ and functions $x(u_1,u_2)$ yields an inner bound on the capacity region. It is not hard to see that this region includes both inner bounds~\eqref{eq:2BC_cvdm} and~\eqref{eq:2BC_superpos}. Furthermore, this region is the optimal rate region 
achieved by using MLD (see Section~\ref{sec:KLdmic}). 

\vspace{2mm} 

The rest of the paper is organized as follows. For simplicity of presentation, in Section~\ref{sec:2dmic} we formally define the problem for the two-user-pair interference channel
and establish our main result for the random code ensemble with time sharing and no superposition coding. 
In Section~\ref{sec:KLdmic}, we extend our result to a multiple-sender multiple-receiver discrete memoryless interference network in which each sender has a single message and wishes to communicate it to a subset of the receivers. This extension includes superposition coding with an arbitrary number of layers. In Section~\ref{sec:HK}, we specialize the result to the Han--Kobayashi coding scheme for the two-user-pair interference channel. Most technical proofs are deferred to the Appendices.

Throughout we closely follow the notation in \cite{ElGamalKim}. 
In particular, for $X \sim p(x)$ and $\e \in (0,1)$, we define the set of $\e$-typical $n$-sequences
$x^n$ (or the typical set in short)~\cite{Orlitsky2001} as
\[
\aep(X) = \bigl\{ x^n : | \#\{i \suchthat x_i = x\}/n - p(x) | \le \e p(x)
\text{ for all } x \in \Xc \bigr\}.
\]
For a tuple of random variables $(X_1,\ldots,X_k)$,
the joint typical set $\aep(X_1,\ldots,X_k)$ is defined as the typical set
$\aep((X_1,\ldots,X_k))$
for a single random variable 
$(X_1,\ldots,X_k)$. The joint typical set
$\aep(X_\Sc)$ for a subtuple $X_\Sc = (X_k: k \in \Sc)$
is defined similarly for each $\Sc \subseteq \natSet k$. We use $\delta(\e) > 0$ to denote a generic function
of $\e > 0$ that tends to zero as $\e \to 0$. Similarly, we use $\e_n \ge 0$ to denote a generic function
of $n$ that tends to zero as $n \to \infty$.

\section{DM-IC with Two User Pairs}  \label{sec:2dmic}

Consider the two-user-pair discrete memoryless interference channel 
(2-DM-IC) $p(y_1,y_2\cond x_1,x_2)$ introduced in Subsection~\ref{sec:intro_2dmic} (see Figure~\ref{fig:2dmic}). 
A $(2^{nR_1}, 2^{nR_2}, n)$ code $\Cc_n$ for the 2-DM-IC consists of
\begin{itemize}
\item
two message sets $\natSet{2^{nR_1}}$ and $\natSet{2^{nR_2}}$,

\item two encoders, where encoder~1 assigns a codeword $x_1^n(m_1)$ to
each message $m_1 \in \natSet{2^{nR_1}}$ and encoder~2 assigns a codeword
$x_2^n(m_2)$ to each message $m_2\in \natSet{2^{nR_2}}$, and

\item
two decoders, where decoder~1 assigns an estimate $\mh_1$ or an
error message $\error$ to each received sequence $y_1^n$ and decoder~2
assigns an estimate $\mh_2$ or an error message $\error$ to each
received sequence $y_2^n$.
\end{itemize}
We assume that the message pair $(M_1,M_2)$ is uniformly distributed over
$\natSet{2^{nR_1}} \times \natSet{2^{nR_2}}$.  The average probability of
error for the code $\Cc_n$ is defined as
\[
\pen(\Cc_n) = \P\bigl\{ (\Mh_1, \Mh_2) \ne (M_1,M_2) \bigr\}.
\]
A rate pair $(R_1,R_2)$ is said to be \emph{achievable} for the 2-DM-IC
if there exists a sequence of $(2^{nR_1}, 2^{nR_2}, n)$ codes $\Cc_n$ such that
$\lim_{n \to \infty} \pen(\Cc_n) = 0$. The \emph{capacity region} 
$\Cr$ of the 2-DM-IC is the closure of the set of achievable rate pairs $(R_1, R_2)$.

We now limit our attention to a randomly generated code ensemble with a special structure.
Let $p = p(q,x_1,x_2) = p(q)\inpmfspace p(x_1|q)\inpmfspace p(x_2|q)$ be a given pmf on $\Qc \times \Xc_1 \times \Xc_2$, where
$\Qc$ is a finite alphabet. 
Suppose that the codewords 
$X_1^n(m_1)$, $m_1 \in \natSet{2^{nR_1}}$, and
$X_2^n(m_2)$, $m_2 \in \natSet{2^{nR_2}}$, that constitute the codebook, are
generated randomly as follows:
\begin{itemize}
\item Let $Q^n \sim \prod_{i=1}^n p_Q(q_i)$. 
\item Let $X_1^n(m_1) \sim \prod_{i=1}^n p_{X_1|Q}(x_{1i}|q_i)$, $m_1 \in \natSet{2^{nR_1}}$,
conditionally independent given $Q^n$.
\item Let $X_2^n(m_2) \sim \prod_{i=1}^n p_{X_2|Q}(x_{2i}|q_i)$, $m_2 \in \natSet{2^{nR_2}}$, 
conditionally independent given $Q^n$.
\end{itemize}
Each instance $\{(x_1^n(m_1),x_2^n(m_2)): (m_1,m_2) \in \natSet{2^{nR_1}} \times \natSet{2^{nR_2}}\}$
of such generated codebooks, along with the corresponding optimal decoders, constitutes
a $(2^{nR_1}, 2^{nR_2}, n)$ code. We refer to the random code ensemble generated in this
manner as the $(2^{nR_1}, 2^{nR_2}, n; p)$ \emph{random code ensemble}.

\begin{definition}[Random coding optimal rate region]
Given a pmf $p = p(q)\inpmfspace p(x_1|q)\inpmfspace p(x_2|q)$,
the \emph{optimal rate region $\Rr^*(p)$ achievable by the $p$-distributed random
code ensemble} is the closure of the set of rate pairs $(R_1,R_2)$ such that
the sequence of $(2^{nR_1},2^{nR_2},n; p)$ random code ensembles $\Cc_n$ satisfies
\[
\lim_{n\to\infty} \E_{\Cc_n}[ \pen(\Cc_n) ]= 0,
\] 
where the expectation is with respect to the random code ensemble $\Cc_n$.
\end{definition}

To characterize the random coding optimal rate region, 
we define $\Rr_1(p)$ to be the set of rate pairs $(R_1,R_2)$ such that
\begin{subequations} \label{eq:2dmic-r1}
\begin{align}
		R_1 &\le I(X_1;Y_1 \cond Q) \\
\intertext{or}
		R_2 &\le I(X_2;Y_1 \cond X_1,Q), \\
		R_1+R_2 &\le I(X_1,X_2; Y_1 \cond Q).
\end{align}
\end{subequations}
Similarly, define $\Rr_2(p)$ by making the index substitution $1 \leftrightarrow 2$.
%
We are now ready to state the main result of the section.

\begin{thm} \label{thm:2dmic}
Given a pmf $p = p(q)\inpmfspace p(x_1|q)\inpmfspace p(x_2|q)$,
the optimal rate region of the DM-IC $p(y_1,y_2|x_1,x_2)$
achievable by the $p$-distributed random code ensemble is
\[
	\Rr^*(p) = \Rr_1(p) \cap \Rr_2(p).
\]
\end{thm}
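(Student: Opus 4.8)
The plan is to establish the two inclusions $\Rr_1(p)\cap\Rr_2(p)\subseteq\Rr^*(p)$ and $\Rr^*(p)\subseteq\Rr_1(p)\cap\Rr_2(p)$ separately, along the two‑step route sketched in the introduction: the first by analyzing one concrete suboptimal rule, \emph{simultaneous nonunique decoding} (SND), and the second by a Fano‑type converse that never inspects the optimal (maximum likelihood) decoder directly. I work throughout in the codebook‑averaged model, where $(Q^n,X_1^n(M_1),X_2^n(M_2),Y_1^n,Y_2^n)$ is i.i.d.\ with single‑letter law $p(q)p(x_1\mid q)p(x_2\mid q)p(y_1,y_2\mid x_1,x_2)$; write $\Cc$ for the random codebook, taken to include $Q^n$.

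\emph{Achievability.} Since the optimal decoders minimize $\pen$ for every realization of $\Cc$, it suffices to exhibit one decoding rule (for both receivers) whose codebook‑averaged error vanishes for every $(R_1,R_2)$ in the interior of $\Rr_1(p)\cap\Rr_2(p)$; the closure in the definition of $\Rr^*(p)$ then finishes this direction. Let receiver~$1$ use SND (declare $\mh_1$ if it is the unique index with $(Q^n,x_1^n(\mh_1),x_2^n(m_2),y_1^n)\in\Typ$ for some $m_2$), and symmetrically at receiver~$2$. With $(M_1,M_2)=(1,1)$, the error event at receiver~$1$ lies in the union of $\{(Q^n,X_1^n(1),X_2^n(1),Y_1^n)\notin\Typ\}$, $\{\exists\,m_1\neq1:(Q^n,X_1^n(m_1),X_2^n(1),Y_1^n)\in\Typ\}$, and $\{\exists\,m_1\neq1,\,m_2\neq1:(Q^n,X_1^n(m_1),X_2^n(m_2),Y_1^n)\in\Typ\}$. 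The first has vanishing probability by the LLN; the packing lemma handles the second when $R_1<I(X_1;Y_1\mid X_2,Q)$. The crucial point---the source of the union structure of $\Rr_1(p)$---is that the third event admits \emph{two} bounds: the joint packing lemma (vanishing when $R_1+R_2<I(X_1,X_2;Y_1\mid Q)$) and, after deleting $X_2^n(m_2)$ from the tuple, the packing lemma for $X_1^n(m_1)$ alone (vanishing when $R_1<I(X_1;Y_1\mid Q)$). Assembling these and letting $\epsilon\to0$ reproduces the description of $\Rr_1(p)$ exactly; a union bound over the two receivers then shows every interior point of $\Rr_1(p)\cap\Rr_2(p)$ is achievable, hence lies in $\Rr^*(p)$.

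\emph{Converse.} Suppose $\lim_n\E_{\Cc}[\pen(\Cc)]=0$. Since $\pen$ under the optimal decoders is at least the error probability at receiver~$l$ under its own optimal single‑receiver decoder, the assumption forces $\E_\Cc[\P\{\widehat M_l\neq M_l\}]\to0$ for each $l$, and Fano's inequality gives $H(M_1\mid Y_1^n,\Cc)\le n\epsilon_n$. I would derive two families of bounds on $R_1$. First, using $I(M_1;M_2\mid\Cc)=0$, the identity $H(Y_1^n\mid M_1,M_2,\Cc)=nH(Y_1\mid X_1,X_2,Q)$, and---after a ``typical codebook'' argument showing that each codeword column's empirical distribution concentrates---$H(Y_1^n\mid M_2,\Cc)\le nH(Y_1\mid X_2,Q)+n\epsilon_n$, one gets $R_1\le I(X_1;Y_1\mid X_2,Q)+\epsilon_n$. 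Second, from $H(Y_1^n\mid\Cc)\le nH(Y_1\mid Q)+n\epsilon_n$ and $H(Y_1^n\mid M_1,\Cc)=nH(Y_1\mid X_1,X_2,Q)+I(M_2;Y_1^n\mid M_1,\Cc)$ one obtains
\[
nR_1\;\le\;nI(X_1,X_2;Y_1\mid Q)\;-\;I(M_2;Y_1^n\mid M_1,\Cc)\;+\;n\epsilon_n .
\]
Here the decisive ingredient is the quantitative form of ``the remaining signal is essentially i.i.d.'': knowing $M_1$, receiver~$1$ can joint‑typicality list‑decode $M_2$ to a list of size $\approx2^{n(R_2-I(X_2;Y_1\mid X_1,Q))^{+}}$ that contains $M_2$ with high probability, so $H(M_2\mid Y_1^n,M_1,\Cc)\le n(R_2-I(X_2;Y_1\mid X_1,Q))^{+}+o(n)$ and hence $I(M_2;Y_1^n\mid M_1,\Cc)\ge n\min\{R_2,\,I(X_2;Y_1\mid X_1,Q)\}-o(n)$. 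Substituting gives $R_1\le\max\{I(X_1,X_2;Y_1\mid Q)-R_2,\;I(X_1;Y_1\mid Q)\}+\epsilon_n$. Letting $n\to\infty$ and combining with $R_1\le I(X_1;Y_1\mid X_2,Q)$---and noting that $R_1>I(X_1;Y_1\mid Q)$ forces $R_2\le I(X_1,X_2;Y_1\mid Q)-R_1<I(X_2;Y_1\mid X_1,Q)$---places $(R_1,R_2)$ in $\Rr_1(p)$. The symmetric argument gives $\Rr_2(p)$, so $\Rr^*(p)\subseteq\Rr_1(p)\cap\Rr_2(p)$.

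\emph{Main obstacle.} The achievability is routine once the two‑way bound on the all‑indices‑wrong event is spotted. The real effort is in the converse, and is concentrated in two places: (i) the list‑size estimate $H(M_2\mid Y_1^n,M_1,\Cc)\le n(R_2-I(X_2;Y_1\mid X_1,Q))^{+}+o(n)$, which requires bounding $\E\bigl|\{m_2:(Q^n,X_1^n(M_1),X_2^n(m_2),Y_1^n)\in\Typ\}\bigr|$, showing $M_2$ lies in this list with high probability, and converting a Markov/Jensen bound on its size into the entropy bound; and (ii) the ``typical codebook'' steps that bring $H(Y_1^n\mid M_2,\Cc)$ and $H(Y_1^n\mid\Cc)$ to within $o(n)$ of $nH(Y_1\mid X_2,Q)$ and $nH(Y_1\mid Q)$, resting on concentration of the codeword‑column empirical distributions (the degenerate cases $R_1=0$ or $R_2=0$ being trivially in the respective region). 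Both are standard in spirit for two users; in the $K$‑sender network of the next section they generalize to the delicate ``maximal recoverable set'' argument that is the technical core of the paper.
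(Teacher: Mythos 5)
Your proposal is correct and follows essentially the same route as the paper: SND with a two-way bound on the wrong-codeword event for achievability, and a Fano argument combined with a joint-typicality list-size estimate (the paper's Lemma~\ref{thm:2MAC_singleLetterizeEntropy}) to single-letterize $H(Y_1^n\mid X_1^n,\Cc_n)$ for the converse. One small simplification you could make: the bounds $H(Y_1^n\mid M_2,\Cc)\le nH(Y_1\mid X_2,Q)$ and $H(Y_1^n\mid\Cc)\le nH(Y_1\mid Q)$ do not require any concentration of codeword empirical types; they follow exactly by dropping the conditioning on the rest of the codebook and observing that, averaged over $\Cc$, the tuple $(Q^n,X_1^n(M_1),X_2^n(M_2),Y_1^n)$ is i.i.d.\ with single-letter law $p(q)\inpmfspace p(x_1\mid q)\inpmfspace p(x_2\mid q)\inpmfspace p(y_1\mid x_1,x_2)$, which is precisely the shortcut the paper uses.
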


Before we prove the theorem, we point out a few important 
properties of the random coding optimal rate region.

\begin{remark}[MAC form]  \label{rem:2dmic_MACform}
Let $\Rr_{1,\text{IAN}}(p)$ be the set of rate pairs $(R_1,R_2)$ such that
\begin{equation*} \label{eq:2dmic-ian}
		R_1 \le I(X_1; Y_1 \cond Q),
\end{equation*}
that is, the achievable rate (region) for the point-to-point channel $p(y_1|x_1)$
by treating the interfering signal $X_2$ as noise. Let $\Rr_{1,\text{SD}}(p)$ be the set of 
rate pairs $(R_1,R_2)$ such that 
\begin{align*}
		R_1 &\le I(X_1;Y_1 \cond X_2,Q), \\
		R_2 &\le I(X_2;Y_1 \cond X_1,Q), \\
		R_1+R_2 &\le I(X_1,X_2; Y_1 \cond Q), 
\end{align*}
that is, the achievable rate region for the multiple access channel $p(y_1|x_1,x_2)$
by decoding for both messages $M_1$ and $M_2$ simultaneously. Then,
we can express $\Rr_1(p)$ as
\begin{equation*} \label{eq:2mac-form}
		\Rr_1(p) = \Rr_{1,\text{IAN}}(p) \cup \Rr_{1,\text{SD}}(p),
\end{equation*}
which is referred to as the \emph{MAC form} of $\Rr_1(p)$, since it is the union of the achievable rate regions of 1-sender 
and 2-sender multiple access channels.
The region $\Rr_2(p)$ can be expressed similarly as the union of the \emph{interference-as-noise
region} $\Rr_{2,\text{IAN}}(p)$ and the \emph{simultaneous-decoding region} $\Rr_{2,\text{SD}}(p)$.
Hence the optimal rate region $\Rr^*(p)$ can be expressed as
\begin{align}
	\Rr^*(p)&=\bigl(\Rr_{1,\text{IAN}}(p) \cap \Rr_{2,\text{IAN}}(p) \bigr) 
	 	\cup \bigl(\Rr_{1,\text{IAN}}(p) \cap \Rr_{2,\text{SD}}(p) \bigr) \notag \\
	& \qquad \cup \bigl(\Rr_{1,\text{SD}}(p) \cap \Rr_{2,\text{IAN}}(p) \bigr)
	 	\cup \bigl(\Rr_{1,\text{SD}}(p) \cap \Rr_{2,\text{SD}}(p) \bigr),
	\label{eq:IAN_SD_expansion}
\end{align}
which is achieved by taking the union over all possible combinations of treating interference as noise and simultaneous decoding at the two receivers.
\end{remark}

\begin{remark}[Min form] \label{rem:2dmic_minform}
The region $\Rr_1(p)$ in \eqref{eq:2dmic-r1} can be equivalently
characterized as
the set of rate pairs $(R_1,R_2)$ such that 
\begin{subequations}
\begin{gather}
	R_1 \le I(X_1; Y_1 \cond X_2, Q), \\
	R_1 + \min\{R_2, I(X_2;Y_1 \cond X_1,Q) \} 
	\le I(X_1,X_2; Y_1 \cond Q).  \label{eq:2dmic_R1minform:second}
\end{gather}
\label{eq:2dmic_R1minform}%
\end{subequations}
The minimum term in \eqref{eq:2dmic_R1minform:second} can be interpreted as the effective rate of the interfering signal $X_2$ 
at the receiver $Y_1$, which is a monotone increasing function of $R_2$ and saturates at the maximum possible rate for distinguishing $X_2$ codewords; see \cite{Bandemer:2010:InterferenceDecodingJournal}. 
When $R_2$ is small, all $X_2$ codewords are distinguishable and the effective rate equals the actual code rate. 
In comparison, when $R_2$ is large, the codewords are not distinguishable and the effective rate equals $I(X_2;Y_1\cond X_1,Q)$, which is the maximum achievable rate for the channel from $X_2$ to $Y_1$.
\end{remark}
\begin{remark}[Nonconvexity]
The random coding optimal rate region $\Rr^*(p)$ is not convex in general. This is exemplified by the deterministic 2-DM-IC in Figure~\ref{fig:2dmic_nonconvex}.
\end{remark}

\begin{figure}[h!]
	\centering
	\phantom{i}\hfill
	\subfigure[Channel block diagram.]{
		\includesvg{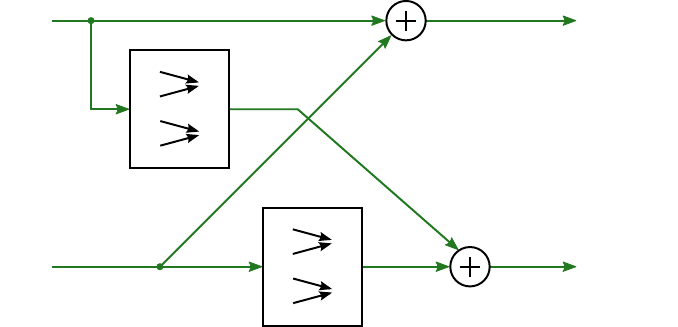}
		\label{fig:2dmic_nonconvexChannel}
	}
	\hfill
	\subfigure[Regions $\Rr_1(p)$, $\Rr_2(p)$, and $\Rr^*(p)$ for $Q=\emptyset$ and $X_1,X_2 \sim \U \natSetFromZero 3$.]{
		\includesvg{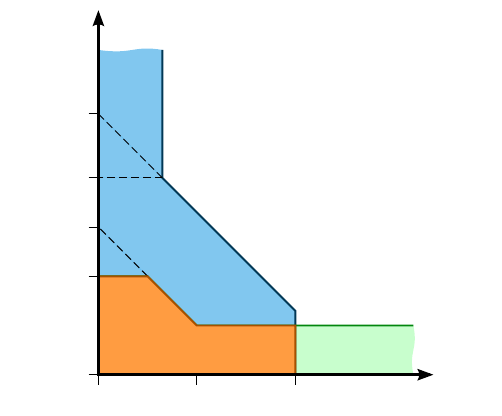}
		\label{fig:2dmic_nonconvexRegion_updated}
	}
	\hfill\phantom{i}
	\caption{An example for nonconvex $\Rr^*(p)$.}
	\label{fig:2dmic_nonconvex}
\end{figure}
A direct approach to proving Theorem~\ref{thm:2dmic} would be to analyze
the performance of maximum likelihood decoding:
\begin{align*}
	\mh_1 &= \arg \max_{m_1} \frac{1}{2^{nR_2}} \sum_{m_2} 
				\prod_{i=1}^n p_{Y_1|X_1,X_2}(y_{1i} \cond x_{1i}(m_1), x_{2i}(m_2)),\\
	\mh_2 &= \arg \max_{m_2} \frac{1}{2^{nR_1}} \sum_{m_1} 
				\prod_{i=1}^n p_{Y_2|X_1,X_2}(y_{2i} \cond x_{1i}(m_1), x_{2i}(m_2))
\end{align*}
for the $p$-distributed random code.
Instead of performing this analysis, which is quite complicated (if possible),
we establish the achievability of $\Rr^*(p)$ by the suboptimal simultaneous nonunique decoding rule, which uses the notion of joint typicality. We then show that if the average probability of error
of the $(2^{nR_1}, 2^{nR_2},n; p)$ random code ensemble tends to zero as $n \to \infty$, then the rate pair
$(R_1,R_2)$ must lie in $\Rr^*(p)$.

\subsection{Proof of Achievability}  \label{sec:2dmic_achievability}

Each receiver uses \emph{simultaneous nonunique decoding}. 
Receiver~1 declares that $\mh_1$ is sent if it is the unique message among $\natSet{2^{nR_1}}$ such that
\begin{align*}
	\bigl( q^n, x_1^n(\mh_1), x_2^n(m_2), y_1^n \bigr) \in \aep \quad \text{for some $m_2 \in \natSet{2^{nR_2}}$}.
\end{align*}
If there is no such message or more than one, it declares an error.
Similarly, receiver~2 finds the unique message $\mh_2 \in \natSet{2^{nR_2}}$ such that
\[
	\bigl( q^n, x_1^n(m_1), x_2^n(\mh_2), y_2^n \bigr) \in \aep \quad \text{for some $m_1 \in \natSet{2^{nR_1}}$}.
\]
To analyze the probability of decoding error averaged over the random codebook ensemble, assume without loss of generality that $(M_1,M_2)=(1,1)$ is sent. Receiver~1 makes an error only if one or both 
of the following events occur:
\begin{align*}
	\Er_1 &= \bigl\{ ( Q^n, X_1^n(1), X_2^n(1), Y_1^n ) \notin \aep \bigr\}, \\
	\Er_2 &= \bigl\{ ( Q^n, X_1^n(m_1), X_2^n(m_2), Y_1^n ) \in \aep 
			 \text{ for some $m_1\ne 1$ and some $m_2$}  \bigr\}.
\end{align*}
By the law of large numbers, $\P(\Er_1)$ tends to zero as $n\to\infty$.

We bound $\P(\Er_2)$ in two ways, which leads to the MAC form of $\Rr_1(p)$
in Remark~\ref{rem:2dmic_MACform}.
First, since the joint typicality of the quadruple $(Q^n, X_1^n(m_1), X_2^n(m_2), Y_1^n)$ for each $m_2$ implies the joint typicality of the triple $(Q^n, X_1^n(m_1), Y_1^n)$, we have
\begin{align*}
	\Er_2 
	&\subseteq \bigl\{ ( Q^n, X_1^n(m_1), Y_1^n ) \in \aep 
			\text{ for some $m_1 \ne 1$}  \bigr\} = \Ec_2'.
\end{align*}
Then, by the packing lemma in~\cite[Section 3.2]{ElGamalKim}, $\P(\Er_2')$ 
tends to zero as $n \to \infty$ if
\begin{equation}
	R_1 < I(X_1;Y_1 \cond Q) - \delta(\eps).
	\label{eq:2dmic_ach_1}
\end{equation}
The second way to bound $\P(\Er_2)$ is to partition $\Ec_2$ into the two events
\begin{align*}
	\Er_{21} &= \bigl\{ ( Q^n, X_1^n(m_1), X_2^n(1), Y_1^n ) \in \aep \text{ for some $m_1\ne 1$}  \bigr\}, \\
	\Er_{22} &= \bigl\{ ( Q^n, X_1^n(m_1), X_2^n(m_2), Y_1^n ) \in \aep \text{ for some $m_1\ne 1$, $m_2\ne 1$}  \bigr\}.
\end{align*}
Again by the packing lemma, 
$\P(\Er_{21})$ and $\P(\Er_{22})$ tend to zero as $n\to \infty$ if
\begin{subequations}
	\begin{align}
	R_1 &< I(X_1;Y_1 \cond X_2,Q) - \delta(\eps), \\
	R_1+R_2 &< I(X_1,X_2; Y_1 \cond Q)- \delta(\eps).
	\end{align}
	\label{eq:2dmic_ach_2}%
\end{subequations}
Thus we have shown that the average probability of decoding error at receiver~1 tends
to zero as $n \to \infty$ if at least one of~\eqref{eq:2dmic_ach_1} and~\eqref{eq:2dmic_ach_2} holds. 
Similarly, we can show that the average probability of decoding error at receiver~2 tends
to zero as $n \to \infty$ if $R_2 < I(X_2;Y_2 \cond Q) - \delta(\eps)$, or 
$R_2 < I(X_2;Y_2 \cond X_1,Q) - \delta(\eps)$ and
$R_1+R_2 < I(X_1,X_2; Y_2 \cond Q)- \delta(\eps)$.
Since $\eps > 0$ is arbitrary and $\delta(\eps) \to 0$ as $\eps \to 0$,
this completes the proof of achievability for any rate pair
$(R_1,R_2)$ in the interior of $\Rr_1(p) \cap \Rr_2(p)$.\hfill\qed

\begin{remark}[Comparison to maximum likelihood decoding]
It is instructive to consider the following progression of decoding rules for receiver~1.

\begin{enumerate}
\item \label{it:rule_ML} Maximum likelihood decoding: 
	\begin{align}
		\mh_1 
		&= \arg \max_{m_1} p(y_1^n \cond m_1) \notag\\
		&= \arg \max_{m_1} \frac{1}{2^{nR_2}} \sum_{m_2} p(y_1^n \cond m_1,m_2) 
			\label{eq:mld}\\
		&= \arg \max_{m_1} \frac{1}{2^{nR_2}} \sum_{m_2} 
				\prod_{i=1}^n p_{Y_1|X_1,X_2}(y_{1i} \cond x_{1i}(m_1), x_{2i}(m_2)), \nonumber
	\end{align}
which is the optimal decoding rule.	

\item \label{it:rule_SML} Simultaneous maximum likelihood decoding:
	\[
		\mh_1 = \arg \max_{m_1} \max_{m_2} p(y_1^n \cond m_1,m_2),
	\]
which is equivalent to performing optimal decoding of the message pair $(M_1,M_2)$ and then taking the first
coordinate. Note the maximum over $m_2$ instead of the average as in~\eqref{eq:mld}.

\item \label{it:rule_Typscore} Typicality score decoding:
	\[
		\mh_1 = \arg \min_{m_1} \min_{m_2} \eps^\star(y_1^n, m_1, m_2),
	\]
	where $\eps^\star(y_1^n, m_1, m_2)$ is defined as the smallest $\eps$ such that
	\[
		(q^n, x_1^n(m_1), x_2^n(m_2), y_1^n) \in \aep.
	\]
Here the notion of joint typicality plays the role of likelihood in previous decoding rules
and $\eps^\star$ captures the penalty for being atypical.

\item \label{it:rule_SND} Simultaneous nonunique decoding: Find the unique $\mh_1$ such that
	\[
		(q^n, x_1^n(\mh_1), x_2^n(m_2), y_1^n) \in \aep \quad \text{for some $m_2$}.
	\]
This is equivalent to performing typicality score decoding with predetermined threshold $\eps$ for 
$\eps^\star(y_1^n, m_1, m_2)$; thus first forming a list of all $(m_1,m_2)$ for which $\eps^\star(y_1^n, m_1, m_2) \le \eps$, and then taking the first coordinate of the members of the list (if it is unique).
\end{enumerate}
Starting from the optimal maximum likelihood decoding rule,
each subsequent rule modifies its predecessor by ``relaxing'' one step. Nonetheless, 
these relaxation steps do not result in any significant loss in performance, as 
is evident in the rate-optimality of the simultaneous nonunique decoding rule.
\end{remark}

\begin{remark}
As observed in~\cite{Bidokhti2012} (see also \eqref{eq:IAN_SD_expansion} in Remark~\ref{rem:2dmic_MACform} above), 
each rate point in $\Rr^*(p)$ can alternatively be achieved by having each receiver specifically decode for
either the desired message alone or both the desired and interfering messages.
\end{remark}

\subsection{Proof of the Converse} \label{sec:2dmic_converse}
Fix a pmf $p = p(q)\inpmfspace p(x_1|q)\inpmfspace p(x_2|q)$ and let $(R_1,R_2)$ be a rate pair achievable by the $p$-distributed random code ensemble.
We prove that this implies that $(R_1,R_2) \in \Rr_1(p) \cap \Rr_2(p)$ as claimed.
Here, we show the details for the inclusion $(R_1,R_2) \in \Rr_1(p)$; the proof for $(R_1,R_2) \in \Rr_2(p)$ follows similarly.
With slight abuse of notation, let $\Cc_n$ denote the random codebook (and the time sharing sequence), namely $(Q^n, X_1^n(1), \dots, X_1^n(2^{nR_1}), X_2^n(1), \dots, X_2^n(2^{nR_2}))$. 

First consider a fixed codebook $\Cc_n=c$. By Fano's inequality,
\begin{align*}
	H(M_1 \cond Y_1^n, \Cc_n=c) &\le 1 + nR_1 \pen(c).
\end{align*}
Taking the expectation over the random codebook $\Cc_n$, it follows that
\begin{align}
	H(M_1 \cond Y_1^n, \Cc_n) &\le 1 + nR_1 \E_{\Cc_n}[ \pen(\Cc_n) ]  
	\le n\eps_n,
	\label{eq:2dmic_fano}
\end{align}
where $\eps_n \to 0$ as $n\to\infty$ since $\E_{\Cc_n}[ \pen(\Cc_n) ] \to 0$.

We prove the conditions in the min form~\eqref{eq:2dmic_R1minform}. To see that the first inequality is true, note that
{\allowdisplaybreaks
\begin{align*}
n( R_1 - \eps_n)
	&= H(M_1 \cond \Cc_n) - n\eps_n \\
	&\annleq{a} I(M_1; Y_1^n \cond \Cc_n)  \\
	&\le I(X_1^n; Y_1^n \cond \Cc_n) \\
	&\le I(X_1^n; Y_1^n, X_2^n \cond \Cc_n) \\
	&= I(X_1^n; Y_1^n\cond X_2^n, \Cc_n ) \\
	&= H(Y_1^n \cond X_2^n, \Cc_n) - H(Y_1^n \cond X_1^n, X_2^n, \Cc_n)  \\
	&\annleq{b} H(Y_1^n \cond X_2^n, Q^n) - H(Y_1^n \cond X_1^n, X_2^n, Q^n)  \\
	&\anneq{c} nI(X_1; Y_1 \cond X_2, Q),
\end{align*}}%
where (a) follows by (the averaged version of) Fano's inequality in~\eqref{eq:2dmic_fano}, (b) follows by omitting some conditioning and using the memoryless property of the channel, and (c) follows since the tuple $(Q_i,X_{1i},X_{2i},Y_i)$ is i.i.d.\@ for all~$i$.
Note that unlike conventional converse proofs where nothing can be assumed about the codebook structure, here we can take advantage of the properties of a given codebook generation procedure.

To prove the second inequality in~\eqref{eq:2dmic_R1minform}, we need the following lemma, which is proved in Appendix~\ref{sec:proof_2MAC_singleLetterizeEntropy}.
\begin{lemma}  \label{thm:2MAC_singleLetterizeEntropy}
	\begin{align*}
		\lim_{n\to \infty} \frac{1}{n} H(Y_1^n \cond X_1^n, \Cc_n) 
		= H(Y_1\cond X_1,X_2,Q) + \min\{ R_2,I(X_2;Y_1\cond X_1,Q)\}.
	\end{align*}
\end{lemma}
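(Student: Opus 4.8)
The plan is to introduce the interfering message $M_2$ as an auxiliary variable and peel it off. Conditioned on $\Cc_n$ (which, recall, contains $Q^n$ and \emph{all} codewords), the only randomness left in $Y_1^n$ is $M_2$ (uniform on $\natSet{2^{nR_2}}$) and the channel noise. Applying the chain rule for entropy in two ways gives
\[
 H(Y_1^n \cond X_1^n, \Cc_n) = H(M_2 \cond X_1^n, \Cc_n) + H(Y_1^n \cond M_2, X_1^n, \Cc_n) - H(M_2 \cond Y_1^n, X_1^n, \Cc_n).
\]
Since $M_2$ is independent of the codebook, $H(M_2 \cond X_1^n, \Cc_n) = nR_2$; and since $X_2^n$ is a function of $(M_2,\Cc_n)$ while $(Q_i,X_{1i},X_{2i},Y_{1i})$ is i.i.d.\ with $Y_1-(X_1,X_2)-Q$, we have $H(Y_1^n \cond M_2, X_1^n, \Cc_n) = H(Y_1^n \cond X_1^n, X_2^n, \Cc_n) = n H(Y_1 \cond X_1, X_2, Q)$. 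Because $R_2 - (R_2 - I(X_2;Y_1\cond X_1,Q))^{+} = \min\{R_2, I(X_2;Y_1\cond X_1,Q)\}$, the lemma is therefore equivalent to
\[
 \lim_{n\to\infty} \tfrac1n\, H(M_2 \cond Y_1^n, X_1^n, \Cc_n) = \bigl(R_2 - I(X_2;Y_1 \cond X_1, Q)\bigr)^{+},
\]
i.e.\ to identifying the asymptotic residual uncertainty about the interfering message. (If $R_2 = 0$ both sides vanish, so assume $R_2 > 0$.)

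For the upper bound I would run a list-decoding argument. Let $L$ be the random list of indices $m_2$ with $(Q^n, X_1^n(1), X_2^n(m_2), Y_1^n) \in \aep$. The transmitted index lies in $L$ with probability $1-\e_n$ by the LLN, and for every other index $X_2^n(m_2)$ is conditionally independent of $(X_1^n(1), Y_1^n)$ given $Q^n$, so the joint typicality lemma~\cite{ElGamalKim} bounds the corresponding membership probability by $2^{-n(I(X_2;Y_1\cond X_1,Q) - \delta(\e))}$, whence $\E[\abs{L}] \le 1 + 2^{n(R_2 - I(X_2;Y_1\cond X_1,Q) + \delta(\e))}$. Conditioning on the indicator $\ind\{M_2 \in L\}$,
\[
 H(M_2 \cond Y_1^n, X_1^n, \Cc_n) \le 1 + nR_2\,\e_n + \E[\log\abs{L}],
\]
and Jensen's inequality gives $\E[\log\abs{L}] \le \log\E[\abs{L}] \le 1 + n(R_2 - I(X_2;Y_1\cond X_1,Q) + \delta(\e))^{+}$. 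Letting $n\to\infty$ and then $\e\to 0$ yields $\limsup_n \tfrac1n H(M_2\cond Y_1^n,X_1^n,\Cc_n) \le (R_2 - I(X_2;Y_1\cond X_1,Q))^{+}$.

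For the matching lower bound, note that $X_1^n$ is a function of $\Cc_n$ and $M_2$ is independent of $\Cc_n$, so $H(M_2 \cond Y_1^n, X_1^n, \Cc_n) = nR_2 - I(M_2; Y_1^n \cond \Cc_n) \ge nR_2 - I(X_2^n; Y_1^n \cond \Cc_n)$ by the data-processing inequality along $M_2 - (X_2^n, \Cc_n) - Y_1^n$. The task is then to show $\tfrac1n I(X_2^n; Y_1^n \cond \Cc_n) \le I(X_2; Y_1 \cond X_1, Q) + \e_n$. For a \emph{fixed} codebook $c$, the time-$i$ output depends only on the time-$i$ inputs, so $Y_1^n$ has conditionally independent coordinates given $(X_2^n, \Cc_n = c)$, which gives $I(X_2^n; Y_1^n \cond \Cc_n = c) \le \sum_i I(X_{2i}; Y_{1i} \cond \Cc_n = c)$; here the per-letter term is the mutual information of the channel $p(y_1\cond x_{1i}(1),\cdot)$ driven by the empirical distribution of the symbols $\{x_{2i}(m_2)\}_{m_2}$. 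Taking the expectation over $\Cc_n$ and conditioning on $Q^n$: given $Q^n$, those symbols are i.i.d.\ $p_{X_2\cond Q}(\cdot\cond q_i)$, so (using $R_2>0$) their empirical distribution concentrates around $p_{X_2\cond Q}(\cdot\cond q_i)$ at a rate depending only on $R_2$ and the alphabet sizes, and by continuity of mutual information $\E_{\Cc_n}[I(X_{2i};Y_{1i}\cond\Cc_n)\cond Q^n] \le I(X_2;Y_1\cond X_1, Q = Q_i) + \e_n$; averaging over $i$ and then over $Q^n$ gives the claim. Combined with nonnegativity of entropy, $\tfrac1n H(M_2\cond Y_1^n,X_1^n,\Cc_n) \ge (R_2 - I(X_2;Y_1\cond X_1,Q))^{+} - \e_n$, and together with the upper bound the limit exists and equals $(R_2 - I(X_2;Y_1\cond X_1,Q))^{+}$, proving the lemma.

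The main obstacle is precisely the single-letterization of $I(X_2^n; Y_1^n \cond \Cc_n)$ in the last step. Unlike in a classical converse, the conditioning on the whole codebook cannot simply be dropped --- in fact $I(X_2^n;Y_1^n\cond\Cc_n)$ is in general strictly larger than $I(X_2^n;Y_1^n\cond Q^n)$, which is exactly why the constant comes out as $I(X_2;Y_1\cond X_1,Q)$ rather than the smaller $I(X_2;Y_1\cond Q)$. One therefore has to work with the per-codebook mutual information and control the fluctuations of the \emph{random} empirical distribution of the interfering codewords' symbols around $p_{X_2\cond Q}$; extracting a clean $\e_n$ there uses the law of large numbers over the $2^{nR_2}$ codewords together with uniform continuity of the entropy functional on the finite output alphabet.
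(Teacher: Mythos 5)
Your reformulation of the lemma as $\lim_n \tfrac1n H(M_2\cond Y_1^n,X_1^n,\Cc_n) = (R_2 - I(X_2;Y_1\cond X_1,Q))^{+}$ is correct, and your list-decoding upper bound on $H(M_2\cond Y_1^n,X_1^n,\Cc_n)$ is essentially identical to the paper's (same list $\Lc$, same joint-typicality-lemma bound, same indicator trick, same Jensen step). That half is fine. The issues are all in the ``matching lower bound.''

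First, two steps there are not correct as written. The identity $H(M_2\cond Y_1^n,X_1^n,\Cc_n) = nR_2 - I(M_2;Y_1^n\cond\Cc_n)$ rests on ``$X_1^n$ is a function of $\Cc_n$,'' but $X_1^n = X_1^n(M_1)$ depends on the random message $M_1$, not just the codebook; the correct identity is $H(M_2\cond Y_1^n,X_1^n,\Cc_n) = nR_2 - I(M_2;Y_1^n\cond X_1^n,\Cc_n)$, and $I(M_2;Y_1^n\cond X_1^n,\Cc_n) \ge I(M_2;Y_1^n\cond\Cc_n)$, so the substitution goes the wrong way for a lower bound. Likewise, the claim that $Y_1^n$ has conditionally independent coordinates given $(X_2^n,\Cc_n=c)$ is false with $M_1$ random, since all coordinates share dependence through the single interfering codeword $X_1^n(M_1)$; it only becomes true once $M_1$ is fixed. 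You tacitly fix $M_1=1$ later (writing $p(y_1\cond x_{1i}(1),\cdot)$), and indeed by symmetry one can work with $(\Cc_n, M_1=1)$, under which $X_1^n$ is determined and both steps become valid. But as the argument stands, neither the fixing of $M_1$ nor the symmetry reduction is stated, and without them the two central claims fail.

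Second, even after this repair, the route is a long detour around something the paper proves in one line. The target inequality $\tfrac1n I(X_2^n;Y_1^n\cond\Cc_n, M_1=1) \le I(X_2;Y_1\cond X_1,Q) + \e_n$ is, via $H(Y_1^n\cond X_1^n,X_2^n,\Cc_n) = nH(Y_1\cond X_1,X_2,Q)$, literally equivalent to $H(Y_1^n\cond X_1^n,\Cc_n) \le nH(Y_1\cond X_1,Q) + n\e_n$. The paper gets this directly: drop conditioning from $\Cc_n$ to $Q^n$, observe that given $(X_1^n,Q^n)$ alone (averaging over the unseen codewords and $M_2$), the pairs $(Q_i,X_{1i},Y_{1i})$ are i.i.d., so $H(Y_1^n\cond X_1^n,Q^n) = nH(Y_1\cond X_1,Q)$ exactly, with no $\e_n$ and no case split on $R_2$. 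The obstacle you flag --- ``the conditioning on the whole codebook cannot simply be dropped'' --- is real for the mutual information $I(X_2^n;Y_1^n\cond\Cc_n)$, but it is an artifact of having converted the problem into one about that mutual information; for the \emph{conditional entropy} $H(Y_1^n\cond X_1^n,\Cc_n)$, dropping $\Cc_n \to Q^n$ only increases entropy, and the resulting quantity single-letterizes for free. Finally, even within your route, the appeal to concentration of the empirical distribution of $\{x_{2i}(m_2)\}_{m_2}$ and continuity of mutual information is unnecessary: by concavity of $I(p_{X};W)$ in the input distribution and Jensen's inequality, $\E_{\Cc_n}\bigl[I(X_{2i};Y_{1i}\cond\Cc_n,M_1=1)\bigm| Q_i,X_{1i}(1)\bigr] \le I(X_2;Y_1\cond X_1=X_{1i}(1),Q=Q_i)$ holds exactly, with no LLN, no $\e_n$, and no $R_2>0$ assumption.

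So: the hard half is right and matches the paper; the easy half, as written, has a genuine gap (the unfixed-$M_1$ issue) and is substantially more complicated than it needs to be.
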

The lemma states that depending on $R_2$, $(1/n) H(Y_1^n \cond X_1^n, \Cc_n)$ either tends to $H(Y_1\cond X_1,Q)$, that is, the remaining received sequence after recovering the desired codeword looks like i.i.d.~noise, or to $R_2 + H(Y_1\cond X_1,X_2,Q)$, 
that is, the receiver can distinguish the interfering codeword from the noise.

\medskip

Equipped with this lemma, we have
\begin{align*}
	n(R_1-\eps_n) 
	&\annleq{a} I(X_1^n; Y_1^n \cond \Cc_n) \\
	&= H(Y_1^n\cond \Cc_n) - H(Y_1^n\cond X_1^n,\Cc_n) \\
	&\le H(Y_1^n\cond Q^n) - H(Y_1^n\cond X_1^n,\Cc_n) \\
	&\annleq{b} nH(Y_1\cond Q) - nH(Y_1\cond X_1,X_2,Q) - \min\{ nR_2,nI(X_2;Y_1\cond X_1,Q)\} + n\eps_n \\
	&= nI(X_1,X_2;Y_1\cond Q) + \min\{nR_2, nI(X_2;Y_1\cond X_1,Q)\} + n\eps_n.
\end{align*}
Here, (a) follows by Fano's inequality and (b) follows by Lemma~\ref{thm:2MAC_singleLetterizeEntropy} with some $\eps_n$ that tends to zero as $n \to \infty$. The conditions for $\Rr_2(p)$ can be proved similarly.
This completes the proof of the converse.

\section{DM-IN with $K$~Senders and $L$~Receivers} \label{sec:KLdmic}

We generalize the previous result to the $K$-sender, $L$-receiver discrete memoryless interference network ($(K,L)$-DM-IN) with input alphabets $\Xc_1,\dots,\Xc_K$, output alphabets $\Yc_1,\dots,\Yc_L$, and pmfs $p(y_1,\dots,y_L\cond x_1,\dots,x_K)$. In this network, each sender $k \in \natSet K$ communicates an independent message $M_k$ at rate $R_k$ and each receiver $l \in \natSet L$ wishes to recover the messages sent by  a subset $\Dc_l \subseteq \natSet K$ of senders (also referred to as a demand set). The channel is depicted in Figure~\ref{fig:KLdmic}.

\begin{figure}[h]
	\centering
	\includesvg{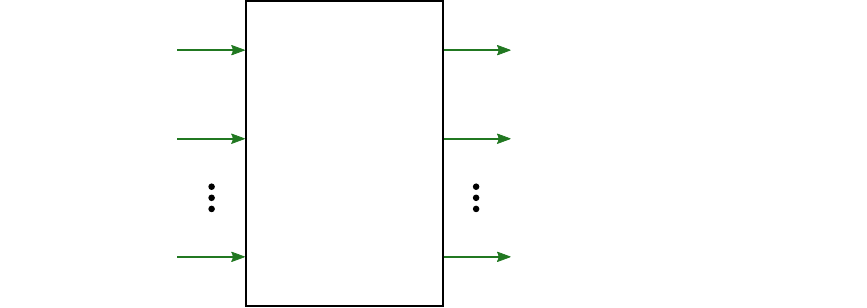}
	\caption{Discrete memoryless interference network with $K$ senders and $L$ receivers.}
	\label{fig:KLdmic}
\end{figure}

More formally, a $(2^{nR_1},\dots,2^{nR_K},n)$ code $\Cc_n$ for the $(K,L)$-DM-IN consists of
\begin{itemize}
\item
$K$ message sets $\natSet{2^{nR_1}}$, \dots, $\natSet{2^{nR_K}}$,

\item 
$K$ encoders, where encoder~$k \in \natSet K$ assigns a code\-word $x_k^n(m_k)$ to each message $m_k \in \natSet{2^{nR_k}}$,

\item
$L$ decoders, where decoder~$l \in \natSet L$ assigns estimates $\mh_{kl}$, $k\in\Dc_l$, or an error message $\error$ to each received sequence $y_l^n$.

\end{itemize}

We assume that the message tuple $(M_1,\dots,M_K)$ is uniformly distributed over $\natSet{2^{nR_1}}\times \dots \times \natSet{2^{nR_K}}$. The average probability of error for the code $\Cc_n$ is defined as
\[
\pen(\Cc_n) = \P\bigl\{ \Mh_{kl} \ne M_k \text{ for some $l \in \natSet L,\, k\in \Dc_l$} \bigr\}.
\]
A rate tuple $(R_1,\dots,R_K)$ is said to be \emph{achievable} for the DM-IN if there exists a sequence of $(2^{nR_1}, \dots, 2^{nR_K}, n)$ codes $\Cc_n$ such that $\lim_{n \to \infty} \pen(\Cc_n) = 0$. The \emph{capacity region} $\Cr$ of the $(K,L)$-DM-IN is the closure of the set of achievable rate tuples $(R_1, \dots, R_K)$.

As in Section~\ref{sec:2dmic}, we limit our attention to a randomly generated code ensemble with a special structure. Let $p = p(q)\inpmfspace p(x_1|q)\cdots p(x_K|q)$ be a given pmf on $\Qc \times \Xc_1 \times \dots \times \Xc_K$, where $\Qc$ is a finite alphabet. Suppose that codewords $X_k^n(m_k)$, $m_k \in \natSet{2^{nR_k}}$, $k\in \natSet K$, are generated randomly as follows.
\begin{itemize}
\item Let $Q^n \sim \prod_{i=1}^n p_Q(q_i)$. 
\item For each $k\in \natSet K$ and $m_k \in \natSet{2^{nR_k}}$, let $X_k^n(m_k) \sim \prod_{i=1}^n p_{X_k|Q}(x_{ki}|q_i)$,
conditionally independent given $Q^n$.
\end{itemize}
Each instance of codebooks generated in this manner, along with the corresponding optimal decoders, constitutes
a $(2^{nR_1}, \dots, 2^{nR_K}, n)$ code. We refer to the random code ensemble thus generated
as the $(2^{nR_1}, \dots, 2^{nR_K}, n; p)$ \emph{random code ensemble}.

\begin{definition}[Random coding optimal rate region]  \label{def:KLdmic_randomcoderegion}
Given a pmf $p = p(q)\inpmfspace p(x_1|q) \cdots p(x_K|q)$,
the \emph{optimal rate region $\Rr^*(p)$ achievable by the $p$-distributed random
code ensemble} is the closure of the set of rate tuples $(R_1,\dots,R_K)$ such that
the sequence of the $(2^{nR_1},\dots,2^{nR_K},n; p)$ random code ensembles $\Cc_n$ satisfies
\[
\lim_{n\to\infty} \E_{\Cc_n}[ \pen(\Cc_n) ]= 0,
\] 
where the expectation is with respect to the random code ensemble $\Cc_n$.
\end{definition}

Note that the setup discussed in Section~\ref{sec:2dmic} as well as the broadcast channel example in Subsection~\ref{sec:intro_2bc} correspond to the special case of $K=L=2$ and demand sets $\Dc_1 = \{1\}$ and $\Dc_2 =\{2\}$. 
More generally, the $p$-distributed random code ensemble for the $(K,L)$-DM-IN captures superposition coding with an arbitrary number of layers. Suppose that there are $K$ senders, some of which need to communicate multiple messages (see Figure~\ref{fig:superposToKL1}). In superposition coding, each message at a sender is encoded into a codeword $U_{k'}^n$ and the sender 
combines (superimposes) all such codewords. By merging the combining functions at the sender with the physical channel $p(y^L | x^K)$,
we obtain a $(K',L)$-DM-IN $p(y^L | u^{K'})$ with ``virtual'' inputs $U_{k'}$, $k' \in \natSet{K'}$, as illustrated in Figure~\ref{fig:superposToKL2}.%
\begin{figure}[h]
	\centering
	\subfigure[Multiple messages per sender via superposition coding.]{
		\includesvg{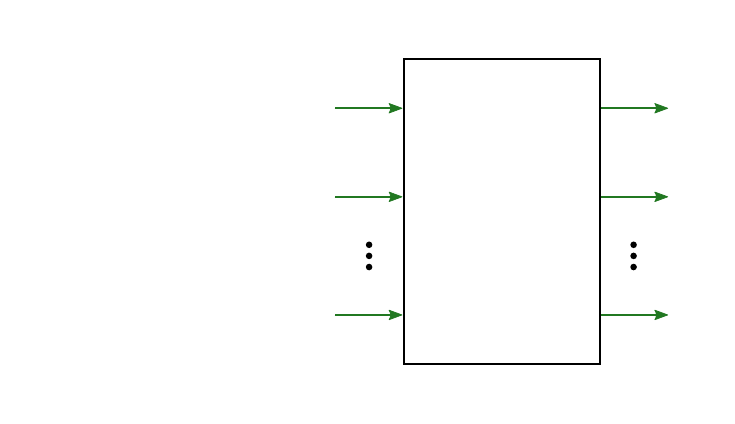}
		\label{fig:superposToKL1}
	}
	\hfill
	\subfigure[Equivalent channel with a single message per sender.]{
		\includesvg{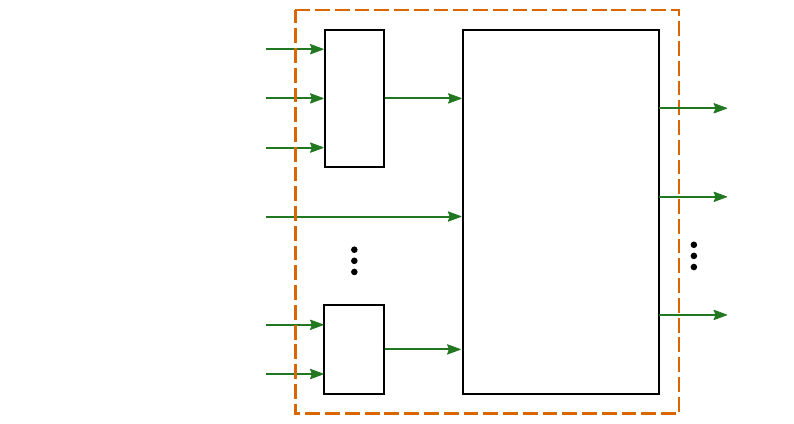}
		\label{fig:superposToKL2}
	}
	\caption{The class of $(K,L)$-DM-INs includes superposition coding with an arbitrary number of layers.}
	\label{fig:superposToKL}
\end{figure}%

Define the rate region $\Rr_1(p)$ as
\begin{equation}
	\Rr_1(p) = \bigcup_{\substack{\Sc \subseteq \natSet K,\\ \Dc_1 \subseteq \Sc}} \Rr_{\text{MAC}(\Sc)}(p),
	\label{eq:KLdmic_R1macform}
\end{equation}
where $\Rr_{\text{MAC}(\Sc)}(p)$ is the achievable rate region for the multiple access channel from the set of senders $\Sc$ to receiver~1, i.e., the set of rate tuples $(R_1,\dots,R_K)$ such that
\begin{align*}
	R_{\Tc} = \sum_{j \in \Tc} R_j & \le I(X_{\Tc}; Y_1 \cond X_{\Sc\setminus\Tc}, Q) \quad \text{for all $\Tc\subseteq \Sc$}.
\end{align*}
Note that the set $\Rr_{\text{MAC}(\Sc)}(p)$ corresponds to the rate region achievable by decoding for the messages from the senders $\Sc$, which contains all desired messages and possibly some interfering messages. Also note that $\Rr_{\text{MAC}(\Sc)}(p)$ contains upper bounds
only on the rates $R_k$, $k\in \Sc$, of the active senders $\Sc$ in the MAC. The signals from the inactive senders in $\compl{\Sc}$ are treated as noise
and the corresponding rates $R_k$ for $k\in \compl{\Sc}$ are unconstrained. Consequently, $\Rr_1(p)$ is unbounded in the coordinates $R_k$ for $k \in \natSet K \setminus \Dc_1$.

The region $\Rr_1(p)$ 
in \eqref{eq:KLdmic_R1macform}
can equivalently be written as the set of rate tuples $(R_1,\dots,R_K)$ such that for all $\Uc \subseteq \natSet K \setminus \Dc_1$ and for all $\Dc$ with $\emptyset \subset \Dc \subseteq \Dc_1 $, 
\begin{equation}
	R_{\Dc} + \min_{\Uc'\subseteq \Uc} \bigl\{R_{\Uc'} + I(X_{\Uc\setminus\Uc'}; Y_1 \cond X_{\Dc},X_{\Uc'},X_{\natSet{K}\setminus \Dc \setminus\Uc},Q) \bigr\}
	\le I(X_{\Dc}, X_{\Uc}; Y_1 \cond X_{\natSet{K}\setminus \Dc \setminus\Uc},Q ).
	\label{eq:KLdmic_R1minform}
\end{equation}
As in the case of the 2-DM-IC, each argument of each term in the minimum represents a different mode of signal saturation.
The equivalence between the MAC form~\eqref{eq:KLdmic_R1macform} and the min form~\eqref{eq:KLdmic_R1minform} can be proved by identifying the largest set of decodable interfering messages as in~\cite{Motahari2011}. For completeness, we provide a proof in Appendix~\ref{sec:KLdmic_minMacFormEquivalence}.

\begin{remark}
The MAC and min forms of $\Rr_1(p)$ are duals to each other in the following sense. The condition for $(R_1,\dots,R_K) \in \Rr_1(p)$ in the MAC form~\eqref{eq:KLdmic_R1macform} can be expressed as
\begin{align}
	&\exists \Sc \subseteq \natSet K, \quad \Dc_1 \subseteq \Sc: \notag \\
	& \hspace{15mm} \forall \Tc \subseteq \Sc: \notag \\
	& \hspace{30mm} R_{\Tc} \le I(X_{\Tc}; Y_1 \cond X_{\Sc\setminus\Tc}, Q). 
	\label{eq:KLdmic_R1macformCond} 
	\\
\intertext{The conditions in the min form~\eqref{eq:KLdmic_R1minform} can be rewritten{\footnotemark} as}
	& \forall \Vc \subseteq \natSet K, \quad \Vc \cap \Dc_1 \ne \emptyset : \notag \\
	& \hspace{15mm} \exists \Vc' \subseteq \Vc, \quad \Vc'\cap\Dc_1 = \Vc\cap\Dc_1: \notag \\ 
	& \hspace{30mm} R_{\Vc'} \le I(X_{\Vc'}; Y_1 \cond X_{\natSet{K}\setminus\Vc},Q).
	\label{eq:KLdmic_R1minformCond}
\end{align}
\footnotetext{To see this, first note that the minimum terms on the left hand side of~\eqref{eq:KLdmic_R1minform} represent a set of conditions of which at least one has to be true, then use the identity 
\begin{align*}
	I(X_{\Dc}, X_{\Uc}; Y_1 \cond X_{\natSet{K}\setminus \Dc \setminus\Uc},Q ) 
	- I(X_{\Uc\setminus\Uc'}; Y_1 \cond X_{\Dc},X_{\Uc'},X_{\natSet{K}\setminus \Dc \setminus \Uc},Q) 
	&
	= I(X_{\Dc},X_{\Uc'}; Y_1 \cond X_{\natSet{K}\setminus \Dc \setminus\Uc},Q),
\end{align*}
and finally, let $\Vc = \Uc \cup \Dc$ and $\Vc' = \Uc' \cup \Dc$.}%
Both conditions involve a set of messages from the senders $\Sc$ (or $\Vc$) 
and its subset $\Tc$ (or $\Vc'$), and impose a mutual information upper bound on the sum rate over the subset. The key difference is the order of the quantifiers $\forall$ and $\exists$.
\end{remark}

Analogous to $\Rr_1(p)$, define the regions $\Rr_2(p), \dots, \Rr_L(p)$ for receivers $2,\dots,L$ by making appropriate index substitutions. 
We are now ready to state the main result for the $(K,L)$-DM-IN.
\begin{thm}  \label{thm:KLdmic}
Given a pmf $p = p(q)\inpmfspace p(x_1|q)\cdots p(x_K|q)$,
the optimal rate region of the $(K,L)$-DM-IN $p(y^L|x^K)$ with demand sets $\Dc_1, \ldots, \Dc_L$
achievable by the $p$-distributed random code ensemble is
\[
	\Rr^*(p) = \bigcap_{l \in \natSet L} \Rr_l(p).
\]
\end{thm}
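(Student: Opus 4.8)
The plan is to prove the two inclusions $\bigcap_{l\in\natSet L}\Rr_l(p)\subseteq\Rr^*(p)$ (achievability) and $\Rr^*(p)\subseteq\bigcap_{l\in\natSet L}\Rr_l(p)$ (converse), lifting the two-user-pair argument of Theorem~\ref{thm:2dmic} to the set-indexed bookkeeping of \eqref{eq:KLdmic_R1macform}--\eqref{eq:KLdmic_R1minform}; I would use the MAC form \eqref{eq:KLdmic_R1macform} for achievability and the min form \eqref{eq:KLdmic_R1minform} for the converse, passing between them via Appendix~\ref{sec:KLdmic_minMacFormEquivalence}. For achievability, each receiver~$l$ uses simultaneous nonunique decoding, declaring $(\mh_k)_{k\in\Dc_l}$ if this is the unique subtuple for which $(q^n,x_1^n(m_1),\dots,x_K^n(m_K),y_l^n)\in\aep$ for some choice of the remaining messages $m_k$, $k\in\natSet K\setminus\Dc_l$. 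Assuming by symmetry that $(1,\dots,1)$ is sent, the error at receiver~$l$ is contained in the union of the event that $(q^n,x_1^n(1),\dots,x_K^n(1),y_l^n)\notin\aep$ (vanishing by the law of large numbers) and the event that $(q^n,x_1^n(m_1),\dots,x_K^n(m_K),y_l^n)\in\aep$ for some tuple with $(m_k)_{k\in\Dc_l}\ne(1,\dots,1)$. Since the rate tuple lies in $\Rr_l(p)$, pick a witnessing $\Sc\supseteq\Dc_l$ with $(R_1,\dots,R_K)$ interior to $\Rr_{\text{MAC}(\Sc)}(p)$; projecting joint typicality onto the coordinates indexed by $\Sc\cup\{q,y_l\}$ and partitioning the second event by the ``wrong set'' $\Bc=\{k\in\Sc:m_k\ne1\}$, which necessarily meets $\Dc_l$, the packing lemma~\cite[Sec.~3.2]{ElGamalKim} sends each piece to zero whenever $R_\Bc<I(X_\Bc;Y_l\cond X_{\Sc\setminus\Bc},Q)-\delta(\eps)$ for all $\Bc\subseteq\Sc$, which holds for $\eps$ small by interiority. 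A union bound over $l$, together with the fact that maximum likelihood decoding only lowers the error of each realized codebook, then gives $\bigcap_l\Rr_l(p)\subseteq\Rr^*(p)$ after taking closures.

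For the converse, fix $l$ and assume $\E_{\Cc_n}[\pen]\to0$, where $\Cc_n$ denotes the random codebook together with $Q^n$; write $X_\Ac^n=(X_k^n(M_k))_{k\in\Ac}$ for the transmitted codewords indexed by $\Ac\subseteq\natSet K$. Averaging Fano's inequality gives $H(M_\Dc\cond Y_l^n,\Cc_n)\le n\eps_n$ for every $\Dc\subseteq\Dc_l$. Fixing a nonempty $\Dc\subseteq\Dc_l$ and a set $\Uc\subseteq\natSet K\setminus\Dc_l$, the same manipulations as in Section~\ref{sec:2dmic_converse} (Fano, data processing, and the conditional independence of distinct transmitted codewords given $\Cc_n$) give $n(R_\Dc-\eps_n)\le I(X_\Dc^n;Y_l^n\cond X_{\natSet K\setminus\Dc\setminus\Uc}^n,\Cc_n)$, and expanding this as $H(Y_l^n\cond X_{\natSet K\setminus\Dc\setminus\Uc}^n,\Cc_n)-H(Y_l^n\cond X_{\natSet K\setminus\Uc}^n,\Cc_n)$ reduces the converse to estimating two conditional entropies. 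The first is at most $nH(Y_l\cond X_{\natSet K\setminus\Dc\setminus\Uc},Q)$, after discarding all codewords but $Q^n$ from the conditioning and single-letterizing. The second is governed by the following generalization of Lemma~\ref{thm:2MAC_singleLetterizeEntropy}, which I would establish in an appendix: for every $\Ac\subseteq\natSet K$,
\[
 \lim_{n\to\infty}\frac1n H(Y_l^n\cond X_\Ac^n,\Cc_n)=\min_{\Bc\subseteq\natSet K\setminus\Ac}\bigl\{R_\Bc+H(Y_l\cond X_{\Ac\cup\Bc},Q)\bigr\}.
\]
Applying this with $\Ac=\natSet K\setminus\Uc$, the right-hand side becomes $\min_{\Uc'\subseteq\Uc}\{R_{\Uc'}+H(Y_l\cond X_{(\natSet K\setminus\Uc)\cup\Uc'},Q)\}$, which is exactly the term needed so that, once the common summand $H(Y_l\cond X^K,Q)$ is restored on both sides, the bound rearranges into the inequality defining $\Rr_l(p)$ in \eqref{eq:KLdmic_R1minform}. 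Since there are finitely many triples $(l,\Dc,\Uc)$ and each $\Rr_l(p)$ is closed, letting $n\to\infty$ yields $(R_1,\dots,R_K)\in\bigcap_l\Rr_l(p)$.

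It remains to prove the key lemma. The ``$\le$'' inequality is immediate: for any $\Bc\subseteq\natSet K\setminus\Ac$ one has $H(Y_l^n\cond X_\Ac^n,\Cc_n)\le H(X_\Bc^n\cond X_\Ac^n,\Cc_n)+H(Y_l^n\cond X_{\Ac\cup\Bc}^n,\Cc_n)\le nR_\Bc+nH(Y_l\cond X_{\Ac\cup\Bc},Q)$. The matching lower bound is the crux, and is where the heuristic from the introduction must be made quantitative: after a maximal set of interfering messages is recovered, the residual received signal is essentially i.i.d.\ noise. I would take $\Bc^\star\subseteq\natSet K\setminus\Ac$ to be the \emph{largest decodable set} relative to $(X_\Ac^n,Y_l^n)$ — the combinatorial object underlying the MAC/min-form equivalence in Appendix~\ref{sec:KLdmic_minMacFormEquivalence}, which also certifies that $\Bc^\star$ attains the minimum above — show $H(X_{\Bc^\star}^n\cond X_\Ac^n,Y_l^n,\Cc_n)=o(n)$ by a packing-lemma decoding argument followed by Fano, and argue that, conditioned on $X_{\Ac\cup\Bc^\star}^n$, the output $Y_l^n$ has empirical conditional type close to $p(y_l\cond x_{\Ac\cup\Bc^\star},q)$ — so many residual codewords remain that none can be individually distinguished — so that $H(Y_l^n\cond X_{\Ac\cup\Bc^\star}^n,\Cc_n)\ge nH(Y_l\cond X_{\Ac\cup\Bc^\star},Q)-o(n)$. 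Combining these via the chain-rule identity $H(Y_l^n\cond X_\Ac^n,\Cc_n)=H(X_{\Bc^\star}^n\cond X_\Ac^n,\Cc_n)+H(Y_l^n\cond X_{\Ac\cup\Bc^\star}^n,\Cc_n)-H(X_{\Bc^\star}^n\cond X_\Ac^n,Y_l^n,\Cc_n)$ yields the lower bound. Making the ``essentially i.i.d.\ residual'' step precise, and establishing the combinatorial characterization of $\Bc^\star$, is the main obstacle; everything else is a direct generalization of the two-user-pair proof.
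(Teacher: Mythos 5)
Your overall architecture matches the paper's: achievability via simultaneous nonunique decoding partitioned by the wrong set and analyzed through the packing lemma for a witnessing $\Sc\supseteq\Dc_l$, converse via averaged Fano followed by a single-letterization lemma, mediated by the MAC/min-form equivalence of Appendix~\ref{sec:KLdmic_minMacFormEquivalence}. Your key lemma is exactly the paper's Lemma~\ref{thm:KLMAC_singleLetterizeEntropy} after the identity $H(Y_l\cond X_{\natSet K},Q)+I(X_{\compl{(\Ac\cup\Bc)}};Y_l\cond X_{\Ac\cup\Bc},Q)=H(Y_l\cond X_{\Ac\cup\Bc},Q)$ is substituted (the paper additionally assumes $\Dc_l\subseteq\Ac$, which is what holds in the application). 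The ``$\le$'' direction and the reduction of the converse to the lemma are fine.

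The genuine gap is in your proposed two-step proof of the lemma's lower bound. After extracting the maximal decodable set $\Bc^\star$ you assert $H(Y_l^n\cond X_{\Ac\cup\Bc^\star}^n,\Cc_n)\ge nH(Y_l\cond X_{\Ac\cup\Bc^\star},Q)-o(n)$ and justify it by typicality of the residual output. That inference does not go through: typicality controls the marginal law of $(X_{\Ac\cup\Bc^\star}^n,Y_l^n)$ but says nothing about how much of the residual randomness the fixed codebook $\Cc_n$ resolves. The quantity $H(Y_l^n\cond X_{\Ac\cup\Bc^\star}^n,\Cc_n)$ depends on the effective number of interfering codewords in $\compl{(\Ac\cup\Bc^\star)}$ as seen through the channel, and your bound is in effect a multi-user soft-covering (channel resolvability) statement that needs the strict undecodability inequalities $R_\Tc>I(X_\Tc;Y_l\cond X_{\natSet K\setminus\Tc},Q)$ quantitatively; proving it carefully essentially reproduces the list-size computation you are trying to avoid, plus edge cases when the rates sit on the boundary. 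The paper's Appendix~\ref{sec:proof_KLMAC_singleLetterizeEntropy} dodges all of this with a one-shot argument: expand $H(Y_l^n\cond X_\Sc^n,\Cc_n)=nH(Y_l\cond X_{\natSet K},Q)+nR_{\compl{\Sc}}-H(M_{\compl{\Sc}}\cond X_\Sc^n,Y_l^n,\Cc_n)$ and upper-bound the last term by $\log\E\card{\Lc}$ for a typicality list $\Lc$, whose expectation is a sum over wrong sets $\Uc\subseteq\compl{\Sc}$ of $2^{n(R_{\compl{\Sc}\setminus\Uc}-I(X_{\compl{\Sc}\setminus\Uc};Y_l\scond X_\Sc,X_\Uc,Q)+\delta(\eps))}$, then pass the $\log$ through the sum via the soft-max bound. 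The min form and the ``residual looks i.i.d.'' phenomenon both fall out simultaneously, and the maximal decodable set $\Bc^\star$ never needs to be singled out; it is needed only for the combinatorial equivalence of the min and MAC forms, not for the lemma. You should replace your two-step plan with this direct list argument.
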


Note that, as for its 2-DM-IC counterpart, this region is not convex in general.	

\begin{example} 
	Consider the $K$-user-pair Gaussian interference network 
	\begin{align*}
		Y_l &= \sum_{k=1}^K g_{kl} X_k + Z_l, \quad l \in \natSet K, 
	\end{align*}
	where $Z_l \sim \N(0,1)$ and $g_{kl}$ are channel gains from sender $k$ to receiver $l$. 
	Assume the Gaussian random code ensemble with $X_k \sim\N(0,1)$, $k\in\natSet K$. The optimal rate region achievable by this random code ensemble was established in~\cite{Motahari2011} and~\cite{Baccelli2011}, and can be recovered from Theorem~\ref{thm:KLdmic} by letting $K=L$, $\Dc_k = \{k\}$ for $k\in\natSet K$, and applying the discretization procedure in~\cite[Section 3.4]{ElGamalKim}. Theorem~\ref{thm:KLdmic} generalizes this result in several directions, since (a) it applies to non-Gaussian networks, (b) it applies to non-Gaussian random code ensembles (which is crucial to analyze the performance under a fixed constellation), and (c) it includes coded time sharing and superposition coding.
\end{example}

\begin{example}
	Consider the deterministic interference channel with three sender--receiver pairs (3-DIC)~\cite{Bandemer:2010:InterferenceDecodingJournal}, where 
	\begin{align*}
		Y_1 &= f_1( g_{11}(X_1), h_1( g_{21}(X_2), g_{31}(X_3) ), \\
		Y_2 &= f_2( g_{22}(X_2), h_2( g_{32}(X_3), g_{12}(X_1) ), \\
		Y_3 &= f_3( g_{33}(X_3), h_3( g_{13}(X_1), g_{23}(X_2) )
	\end{align*}
	for some loss functions $g_{kl}$ and combining functions $h_k$ and $f_k$,  $k,l\in \{1,2,3\}$. The combining functions are supposed to be injective in each argument. This setting is of interest since it contains as special cases the El~Gamal--Costa two-user-pair interference channel~\cite{ElGamalCosta82}, for which the Han--Kobayashi coding scheme achieves the capacity region, and the Avestimehr--Diggavi--Tse $q$-ary expansion deterministic (QED) interference channel~\cite{Avestimehr2011}, which approximates Gaussian interference networks in the high-power regime.
	The 3-DIC is an instance of a $(K,L)$-DM-IN with $L=K=3$ and $\Dc_k = \{k\}$ for $k\in \natSet K$.
	The interference decoding inner bound on the 3-DIC capacity region in~\cite{Bandemer:2010:InterferenceDecodingJournal} coincides with the region in Theorem~\ref{thm:KLdmic} in its min form. Beyond the results in~\cite{Bandemer:2010:InterferenceDecodingJournal}, Theorem~\ref{thm:KLdmic} establishes that the interference decoding inner bound is in fact optimal given the codebook structure.
	Note that for the 3-DIC channel, we can identify each minimum term with a specific signal in the channel block diagram for which the term counts the number of distinguishable sequences.
\end{example}

\begin{proof}[Proof of Theorem~\ref{thm:KLdmic}]
We focus only on receiver 1 for which $M_k$, $k\in \Dc_1$, are the desired messages and $M_k$, $k \in \compl{\Dc_1} = \natSet K \setminus \Dc_1$, are interfering messages.
Achievability is proved using simultaneous nonunique decoding. Receiver~1 declares that $\mh_{\Dc_1}$ is sent if it is the unique message tuple such that 
\begin{align*}
	\bigl( q^n, x^n_{\Dc_1}(\mh_{\Dc_1}), x^n_{\compl{\Dc_1}}(m_{\compl{\Dc_1}}), y_1^n \bigr) \in \aep \quad \text{for some $m_{\compl{\Dc_1}}$},
\end{align*}
where $x^n_{\Dc_1}(\mh_{\Dc_1})$ is the tuple of $x_k^n(\mh_k)$, $k \in \Dc_1$, and similarly, $x^n_{\compl{\Dc_1}}(m_{\compl{\Dc_1}})$ is the tuple of $x_k^n(m_k)$, $k \in \compl{\Dc_1}$.
The analysis follows similar steps as in Subsection~\ref{sec:2dmic_achievability}.

To prove the converse, fix a pmf $p$ and let $(R_1,\dots,R_K)$ be a rate tuple that is achievable by the $p$-distributed random code ensemble. 
We need the following generalization of Lemma~\ref{thm:2MAC_singleLetterizeEntropy}, which is proved in Appendix~\ref{sec:proof_KLMAC_singleLetterizeEntropy}.

\begin{lemma}  \label{thm:KLMAC_singleLetterizeEntropy}
	If $\Dc_1 \subseteq \Sc \subseteq \natSet K$, then
	\begin{align*}
		\lim_{n\to \infty} \frac{1}{n} H(Y_1^n \cond X_{\Sc}^n, \Cc_n) 
		& = H(Y_1\cond X_{\natSet K},Q) 
		  + \min_{\Uc \subseteq \compl{\Sc}}( R_{\Uc} + I(X_{\compl{(\Sc \cup \Uc)}};Y_1\cond X_{\Sc \cup \Uc},Q)).
	\end{align*}
\end{lemma}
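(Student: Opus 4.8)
The plan is to first rewrite $H(Y_1^n\cond X_\Sc^n,\Cc_n)$ exactly in terms of an equivocation and then bound that equivocation from both sides. Write $\compl\Sc=\natSet K\setminus\Sc$ for the interfering senders at receiver~$1$. Conditioning on $(M_{\compl\Sc},X_\Sc^n,\Cc_n)$ determines all transmitted codewords $X_{\natSet K}^n$, and given these $Y_1^n$ is independent of the messages and has the memoryless product law, while $(Q_i,X_{\natSet K,i})$ is i.i.d.\ $\sim p(q)\prod_k p(x_k\cond q)$; hence $H(Y_1^n\cond M_{\compl\Sc},X_\Sc^n,\Cc_n)=nH(Y_1\cond X_{\natSet K},Q)$. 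Since $M_{\compl\Sc}$ is uniform and independent of $(X_\Sc^n,\Cc_n)$, the chain rule gives
\[
  H(Y_1^n\cond X_\Sc^n,\Cc_n)=nR_{\compl\Sc}+nH(Y_1\cond X_{\natSet K},Q)-H(M_{\compl\Sc}\cond Y_1^n,X_\Sc^n,\Cc_n).
\]
It thus suffices to show that $\tfrac1n H(M_{\compl\Sc}\cond Y_1^n,X_\Sc^n,\Cc_n)\to\max_{\Wc\subseteq\compl\Sc}\bigl(R_{\Wc}-I(X_\Wc;Y_1\cond X_{\natSet K\setminus\Wc},Q)\bigr)$; the lemma then follows from the substitution $\Uc=\compl\Sc\setminus\Wc$, under which $\Sc\cup\Uc=\natSet K\setminus\Wc$, $\compl{(\Sc\cup\Uc)}=\Wc$, and $R_{\compl\Sc}-R_{\Wc}=R_{\Uc}$, which turns the $\max$ into the advertised $\min$.

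For the lower bound on the equivocation, fix $\Wc\subseteq\compl\Sc$, drop the nonnegative term $H(M_{\compl\Sc\setminus\Wc}\cond Y_1^n,X_\Sc^n,\Cc_n)$, and write $H(M_\Wc\cond M_{\compl\Sc\setminus\Wc},Y_1^n,X_\Sc^n,\Cc_n)=nR_{\Wc}-I(M_\Wc;Y_1^n\cond M_{\compl\Sc\setminus\Wc},X_\Sc^n,\Cc_n)$, using that $M_\Wc$ is independent of $(M_{\compl\Sc\setminus\Wc},X_\Sc^n,\Cc_n)$. By the data-processing inequality along $M_\Wc-X_\Wc^n-Y_1^n$ (conditioned on the remaining transmitted codewords, which are known from $M_{\compl\Sc\setminus\Wc}$, $X_\Sc^n$, and $\Cc_n$) and single-letterization, the mutual information term is at most $H(Y_1^n\cond X_{\natSet K\setminus\Wc}^n,Q^n)-H(Y_1^n\cond X_{\natSet K}^n,\Cc_n)\le nI(X_\Wc;Y_1\cond X_{\natSet K\setminus\Wc},Q)$. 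Maximizing over $\Wc$ gives the lower bound, with no vanishing error term.

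For the upper bound, have receiver~$1$, which knows $\Cc_n$, $X_\Sc^n$, and $Y_1^n$, form the list $\Lc$ of all $m_{\compl\Sc}$ with $(Q^n,X_\Sc^n,x_{\compl\Sc}^n(m_{\compl\Sc}),Y_1^n)\in\aep$; by the law of large numbers, $M_{\compl\Sc}\in\Lc$ with probability at least $1-\e_n$. To bound $\E_{\Cc_n}[\,|\Lc|\,]$, group the competitors by the set $\Wc\subseteq\compl\Sc$ of coordinates on which $m_{\compl\Sc}$ disagrees with $M_{\compl\Sc}$: there are at most $2^{nR_{\Wc}}$ of them, and since the codewords $X_k^n(m_k)$ for $k\in\Wc$, $m_k\ne M_k$, are conditionally i.i.d.\ given $Q^n$ and independent of $(X_{\natSet K\setminus\Wc}^n,Y_1^n)$, the joint typicality lemma together with the fact that $X_\Wc$ and $X_{\natSet K\setminus\Wc}$ are conditionally independent given $Q$ bounds the corresponding probability by $2^{-n(I(X_\Wc;Y_1\cond X_{\natSet K\setminus\Wc},Q)-\delta(\e))}$. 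Summing over the constantly many choices of $\Wc$ yields $\E[\,|\Lc|\,]\le 1+\sum_{\emptyset\ne\Wc\subseteq\compl\Sc}2^{n(R_{\Wc}-I(X_\Wc;Y_1\cond X_{\natSet K\setminus\Wc},Q)+\delta(\e))}$. A Fano-with-a-list argument — conditioning on the indicator of $\{M_{\compl\Sc}\in\Lc\}$, bounding the off-list contribution by $\e_n nR_{\compl\Sc}$, and using Jensen's inequality to pull $\log$ out of $\E[\,|\Lc|\,]$, which is at least $1$ on the list event — then gives $\tfrac1n H(M_{\compl\Sc}\cond Y_1^n,X_\Sc^n,\Cc_n)\le\max_{\Wc\subseteq\compl\Sc}\bigl(R_{\Wc}-I(X_\Wc;Y_1\cond X_{\natSet K\setminus\Wc},Q)\bigr)+\delta(\e)+\e_n$. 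Letting $\e\to0$ matches the lower bound and establishes the limit; the special case $K=2$, $\Sc=\{1\}$ recovers Lemma~\ref{thm:2MAC_singleLetterizeEntropy}.

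I expect the main obstacle to be the upper bound: one must pin down the conditional-independence structure of the ``fresh'' codewords given $Q^n$ so that the joint typicality lemma delivers exactly the exponent $I(X_\Wc;Y_1\cond X_{\natSet K\setminus\Wc},Q)$ for each disagreement pattern $\Wc$, and then convert the list-size estimate into an equivocation bound via the list version of Fano's inequality, taking care that $\Lc$ is nonempty precisely on the high-probability event while its low-probability complement is absorbed by the crude bound $H(M_{\compl\Sc}\cond\,\cdot\,)\le nR_{\compl\Sc}$.
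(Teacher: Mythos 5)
Your proposal is correct and follows essentially the same route as the paper: you use the identity $H(Y_1^n\cond X_\Sc^n,\Cc_n)=nR_{\compl\Sc}+nH(Y_1\cond X_{\natSet K},Q)-H(M_{\compl\Sc}\cond Y_1^n,X_\Sc^n,\Cc_n)$, establish the easy direction by elementary entropy/mutual-information bounds exploiting the i.i.d.\ codebook structure, and establish the hard direction by the list argument grouped by disagreement pattern $\Wc\subseteq\compl\Sc$ together with the joint typicality lemma and a Fano-with-a-list step. The only cosmetic difference is that the paper proves the upper bound on $\tfrac1n H(Y_1^n\cond X_\Sc^n,\Cc_n)$ directly via $H(Y_1^n\cond X_\Sc^n,\Cc_n)\le nR_\Uc+H(Y_1^n\cond X_\Sc^n,X_\Uc^n,\Cc_n)$ rather than through the equivocation identity, but the two are equivalent.
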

We now establish~\eqref{eq:KLdmic_R1minform} as follows. Fix a subset of desired message indices, $\Dc \subseteq \Dc_1$, and a subset of interfering message indices, $\Uc \subseteq \compl{\Dc_1}$. Then
\begin{align*}
	 n(R_{\Dc}-\eps_n)  
	&\annleq{a} I(X_{\Dc}^n; Y_1^n \cond \Cc_n) \\
	&\le I(X_{\Dc}^n; Y_1^n, X^n_{\compl{(\Dc \cup \Uc)}}\cond \Cc_n) \\
	&\le I(X_{\Dc}^n; Y_1^n \cond X^n_{\compl{(\Dc \cup \Uc)}}, \Cc_n) \\
	&= H(Y_1^n\cond X^n_{\compl{(\Dc \cup \Uc)}}, \Cc_n) - H(Y_1^n\cond X^n_{\compl{\Uc}}, \Cc_n) \\
	&\annleq{b} nH( Y_1\cond X_{\compl{(\Dc \cup \Uc)}}, Q ) - nH(Y_1\cond X_{\natSet K},Q) 
		- n \!\cdot\! \min_{\Uc' \subseteq \Uc} ( R_{\Uc'} + I(X_{\compl{(\compl{\Uc} \cup \Uc')}};Y_1\cond X_{\compl{\Uc} \cup \Uc'},Q) ) + n\eps_n \\
	&= nI( X_{\Dc\cup\Uc}; Y_1^n \cond X_{\compl{(\Dc \cup \Uc)}}, Q ) 
		- n \!\cdot\! \min_{\Uc' \subseteq \Uc} ( R_{\Uc'} + I(X_{\Uc \setminus \Uc'};Y_1\cond X_{\compl{(\Uc\setminus \Uc')}},Q) ) + n\eps_n,
\end{align*}
where (a) follows by Fano's inequality and (b) follows by Lemma~\ref{thm:KLMAC_singleLetterizeEntropy}. 
This completes the proof of the converse.
\end{proof}

\section{Application to the Han--Kobayashi coding scheme} \label{sec:HK}
We revisit the two-user-pair DM-IC in Figure~\ref{fig:2dmic}. The best known inner bound on the capacity region is achieved by the Han--Kobayashi coding scheme~\cite{HanKobayashi81}. In this scheme, the message $M_1$ is split into common and private messages $M_{12}$ and $M_{11}$ at rates $R_{12}$ and $R_{11}$, respectively, such that $R_1= R_{12} + R_{11}$. Similarly $M_2$ is split into common and private messages $M_{21}$ and $M_{22}$ at rates $R_{21}$ and $R_{22}$ such that $R_2=R_{22} + R_{21}$. More specifically, the scheme uses random codebook generation and coded time sharing as follows. 
Fix a pmf $p=p(q) \inpmfspace p(u_{11}|q) \inpmfspace p(u_{12}|q) \inpmfspace p(u_{21}|q) \inpmfspace p(u_{22}|q)\inpmfspace p(x_1|u_{11},u_{12},q)\inpmfspace p(x_2|u_{21},u_{22},q)$, where the latter two conditional pmfs represent deterministic mappings $x_1(u_{11},u_{12})$ and $x_2(u_{21},u_{22})$.  
Randomly generate a coded time sharing sequence $q^n \sim$ $ \prod_{i=1}^n p_Q(q_i)$. For each $k,k' \in \{1,2\}$ and $m_{kk'} \in \natSet{2^{nR_{kk'}}}$, randomly and conditionally independently generate a sequence $u_{kk'}^n(m_{kk'})$ according to $\prod_{i=1}^n p_{U_{kk'}|Q}(u_{kk'i}|q_i)$. To communicate message pair $(m_{11},m_{12})$, sender $1$ transmits $x_{1i}=x_1( u_{11i}, u_{12i})$ for $i \in \natSet n$, and analogously for  sender 2.
Receiver $k = 1,2$ recovers its intended message $M_k$ and the common message from the other sender  (although it is not required to). While this decoding scheme helps reduce the effect of interference, it results in additional constraints on the rates for common messages. 
The Han--Kobayashi coding scheme is illustrated in Figure~\ref{fig:HK42dmic_HKdecod}.

\begin{figure}[h]
	\centering
	\includesvg{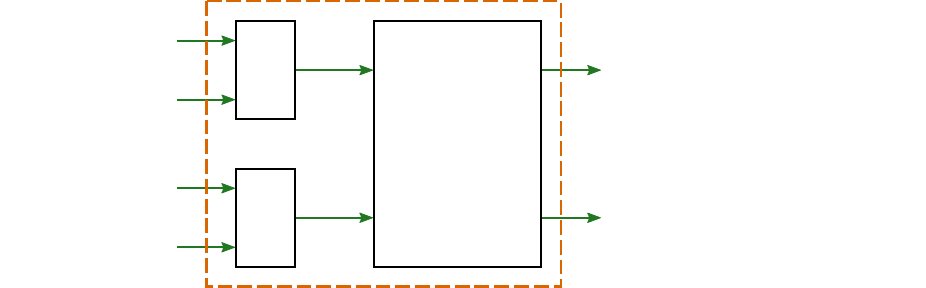}
	\caption{Han--Kobayashi coding scheme.}
	\label{fig:HK42dmic_HKdecod}
\end{figure}

Let $\Rr_{\text{HK},1}(p)$ be defined as the set of rate tuples $(R_{11},R_{12},R_{21},R_{22})$ such that
\begin{subequations}
\begin{align}
	R_{11} &\le I(U_{11}; Y_1 \cond U_{12}, U_{21}, Q), \\
	R_{12} &\le I(U_{12}; Y_1 \cond U_{11}, U_{21}, Q), \\
	R_{21} &\le I(U_{21}; Y_1 \cond U_{11}, U_{12}, Q), \\
	R_{11}+R_{12} &\le I(U_{11},U_{12}; Y_1 \cond U_{21}, Q), \\
	R_{11}+R_{21} &\le I(U_{11},U_{21}; Y_1 \cond U_{12}, Q), \\
	R_{12}+R_{21} &\le I(U_{12},U_{21}; Y_1 \cond U_{11}, Q), \\
	R_{11}+R_{12}+R_{21} &\le I(U_{11},U_{12},U_{21}; Y_1 \cond Q).	
\end{align}
\label{eq:R_HK1}%
\end{subequations}
Similarly, define $\Rr_{\text{HK},2}(p)$ by making the sender/receiver index substitutions $1\leftrightarrow 2$ in the definition of $\Rr_{\text{HK},1}(p)$. 
As shown by Han and Kobayashi~\cite{HanKobayashi81}, the coding scheme achieves any rate pair $(R_1,R_2)$ that is in the interior 
of
\begin{equation}
\Rr_\text{HK} = \ReduceA\left(\bigcup_p  \Rr_{\text{HK},1}(p) \cap \Rr_{\text{HK},2}(p)\right), \label{eq:R_HK} 
\end{equation}
where $\ReduceA$ is the projection that maps the $4$-dimensional (convex) set of rate tuples $(R_{11},R_{12},R_{21},R_{22})$ into a $2$-dimensional rate region of rate pairs $(R_1,R_2)
= (R_{11}+R_{12}, R_{21}+R_{22})$. 

We are interested in finding the rate region that is achievable by the Han--Kobayashi encoding functions in conjunction with the \emph{optimal} decoding functions. To this end, note that by combining the channel and the deterministic mappings as indicated by the dashed box in Figure~\ref{fig:HK42dmic_HKdecod}, the channel $(U_{11},U_{12},U_{21},U_{22}) \to (Y_1,Y_2)$ is a $(4,2)$-DM-IN. After removing the artificial requirement for each decoder to recover the interfering sender's common message, the message demands are $\Dc_1=\{11,12\}$ and $\Dc_2=\{21,22\}$. Moreover, the Han--Kobayashi encoding scheme is in fact the $p$-distributed random code ensemble applied to this network, as defined in Section~\ref{sec:KLdmic}. 
\begin{definition}
	The \emph{optimal rate region $\Rr_\text{opt}$ achievable by the Han--Kobayashi random code ensemble} is defined as
	\begin{align*}
		\Rr_\text{opt} &= \ReduceA\left( \bigcup_{p}  \Rr^*(p) \right),
	\end{align*}
	where the union is over pmfs of the form $p = p(q)\inpmfspace p(u_{11}|q) \inpmfspace p(u_{12}|q) \inpmfspace p(u_{21}|q) \inpmfspace p(u_{22}|q) \inpmfspace p(x_1| u_{11}, u_{12}) \inpmfspace p(x_2 | u_{21}, u_{22})$ with the latter two factors representing deterministic mappings $x_1(u_{11},u_{12})$ and $x_2(u_{21},u_{22})$, and $\Rr^*(p)$ is the optimal rate region achievable by the $p(q)\inpmfspace p(u_{11}|q) \inpmfspace p(u_{12}|q) \inpmfspace p(u_{21}|q) \inpmfspace p(u_{22}|q)$-distributed random code ensemble for the $(4,2)$-DM-IN $p(y_1,y_2|u_{11},u_{12},u_{21},u_{22}) = p_{Y_1,Y_2|X_1,X_2}(y_1,y_2|x_1(u_{11},u_{12}), x_2(u_{21},u_{22}))$ (cf.\@ Definition~\ref{def:KLdmic_randomcoderegion}).
\end{definition}

Then Theorem~\ref{thm:KLdmic} implies the following.
\begin{corollary}
  \label{thm:HK_OptimalReceiver}
$		\Rr_\text{opt} = \Rr_\text{HK}.$
\end{corollary}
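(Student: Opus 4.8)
The plan is to read both regions through Theorem~\ref{thm:KLdmic} applied to the $(4,2)$-DM-IN obtained from the Han--Kobayashi scheme (the dashed box of Figure~\ref{fig:HK42dmic_HKdecod}), with senders indexed by $\{11,12,21,22\}$ and demand sets $\Dc_1=\{11,12\}$, $\Dc_2=\{21,22\}$, so that $\Rr^*(p)=\Rr_1(p)\cap\Rr_2(p)$ with each $\Rr_l(p)$ in the MAC form~\eqref{eq:KLdmic_R1macform}. The first observation is that the seven inequalities of $\Rr_{\text{HK},1}(p)$ in~\eqref{eq:R_HK1} are exactly the MAC bounds over the seven nonempty subsets $\Tc\subseteq\{11,12,21\}$, with $U_{22}$ treated as noise; that is, $\Rr_{\text{HK},1}(p)=\Rr_{\text{MAC}(\{11,12,21\})}(p)$, and symmetrically $\Rr_{\text{HK},2}(p)=\Rr_{\text{MAC}(\{12,21,22\})}(p)$. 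Since $\{11,12,21\}\supseteq\Dc_1$ and $\{12,21,22\}\supseteq\Dc_2$, we get $\Rr_{\text{HK},1}(p)\cap\Rr_{\text{HK},2}(p)\subseteq\Rr_1(p)\cap\Rr_2(p)=\Rr^*(p)$ for every $p$, and taking the union over $p$ followed by the (monotone) projection $\ReduceA$ yields $\Rr_\text{HK}\subseteq\Rr_\text{opt}$ immediately.

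The content is the reverse inclusion $\Rr_\text{opt}\subseteq\Rr_\text{HK}$. Because $\Rr^*(p)=\Rr_1(p)\cap\Rr_2(p)=\bigcup_{\Sc_1\supseteq\Dc_1,\;\Sc_2\supseteq\Dc_2}\bigl(\Rr_{\text{MAC}(\Sc_1)}(p)\cap\Rr_{\text{MAC}(\Sc_2)}(p)\bigr)$ by Theorem~\ref{thm:KLdmic} and~\eqref{eq:KLdmic_R1macform}, any rate tuple in $\Rr^*(p)$ lies in $\Rr_{\text{MAC}(\Sc_1)}(p)\cap\Rr_{\text{MAC}(\Sc_2)}(p)$ for some $\Sc_1,\Sc_2$, the only freedom being the two subsets $\Sc_1\cap\{21,22\}$ and $\Sc_2\cap\{11,12\}$. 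I will produce a pmf $p'$ of the Han--Kobayashi form and a rate tuple with the same projection lying in $\Rr_{\text{HK},1}(p')\cap\Rr_{\text{HK},2}(p')$, by \emph{re-splitting}: relabel sender~2's common/private split so that the portion of sender~2's codebook that receiver~1 actually decodes becomes the common part $U_{21}$ and the remainder becomes the private part $U_{22}$. Concretely, on sender~2's factors of the pmf replace $(U_{21},U_{22})$ by $(\Ut_{21},\Ut_{22})$ with $\Ut_{21}\defas U_{\Sc_1\cap\{21,22\}}$ and $\Ut_{22}\defas U_{\{21,22\}\setminus\Sc_1}$ (depending on $\Sc_1\cap\{21,22\}$ this is the identity, a swap of the two layers, the merge of both into $\Ut_{21}$ with $\Ut_{22}$ constant, or the reverse), pick $\xt_2$ so that the induced $X_2$ is unchanged, and redistribute the rates as $\Rt_{21}=R_{\Sc_1\cap\{21,22\}}$, $\Rt_{22}=R_{21}+R_{22}-\Rt_{21}$. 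Under this transformation each MAC bound of $\Rr_{\text{MAC}(\Sc_1)}(p)$ at receiver~1 becomes the corresponding Han--Kobayashi bound of~\eqref{eq:R_HK1} for $p'$, and the remaining Han--Kobayashi inequalities at receiver~1 hold automatically because the merged layer $\Ut_{21}$ carries the summed rate $R_{\Sc_1\cap\{21,22\}}$ and affects $Y_1$ only through the unchanged $X_2$. An entirely analogous transformation $\phi_1$ on sender~1's factors, built from $\Sc_2\cap\{11,12\}$, makes receiver~2 Han--Kobayashi-compliant.

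These two transformations act on disjoint factors of the product pmf $p(q)\inpmfspace p(u_{11}\cond q)\inpmfspace p(u_{12}\cond q)\inpmfspace p(u_{21}\cond q)\inpmfspace p(u_{22}\cond q)$ and on disjoint rate coordinates, so they commute; their composition is again a Han--Kobayashi-form pmf $p'$ inducing exactly the same joint law of $(Q,X_1,X_2,Y_1,Y_2)$, and applying both rate redistributions produces a tuple with $\Rt_{11}+\Rt_{12}=R_{11}+R_{12}$ and $\Rt_{21}+\Rt_{22}=R_{21}+R_{22}$, i.e.\@ the same image under $\ReduceA$. The main obstacle---and the only place where care is genuinely needed---is the cross-check that the transformation introduced for one receiver does not spoil Han--Kobayashi-compliance at the other: after $\phi_1$ (which merely merges, swaps, or trivializes $U_{11},U_{12}$) one must re-verify the receiver~1 bounds, which goes through because those bounds involve $U_{11},U_{12}$ only through their joint law and the merged layer again carries the relevant summed rate, so the inequality established before $\phi_1$ is exactly the one needed afterward; symmetrically $\phi_2$ preserves receiver~2's bounds. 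Carrying out this bookkeeping over the few cases for $\Sc_1\cap\{21,22\}$ and $\Sc_2\cap\{11,12\}$ establishes $\Rr_\text{opt}\subseteq\Rr_\text{HK}$, which together with the first paragraph gives $\Rr_\text{opt}=\Rr_\text{HK}$.
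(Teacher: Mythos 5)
Your proof is correct and its core mechanism — re-splitting sender~2's layers according to what receiver~1 actually decodes (and symmetrically for sender~1 and receiver~2), so that after the transformation the HK constraints~\eqref{eq:R_HK1} are exactly a subset of the MAC$(\Sc_l)$ constraints — is the same idea the paper uses in its bullets for $\Sc_1=\{11,12,21,22\}$ (merge) and $\Sc_1=\{11,12,22\}$ (swap). Where you diverge is the degenerate case $\Sc_1=\{11,12\}$ (and its symmetric counterparts): the paper switches strategy there, applying Fourier--Motzkin elimination and checking containment of the resulting $2$-dimensional region against the compact description~\eqref{eq:R_HK_compact} under specializations like $U_{21}=\emptyset$, $U_{22}=X_2$. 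You instead push the re-splitting through uniformly by letting $\Ut_{21}$ be a constant with $\Rt_{21}=0$, which handles all sixteen $(\Sc_1,\Sc_2)$ cases by a single construction and avoids Fourier--Motzkin entirely; that is a cleaner route to the same inclusion. One small imprecision worth fixing: after applying $\phi_1$, the seven receiver-1 inequalities in~\eqref{eq:R_HK1} are not literally ``the same ones'' as before $\phi_1$ — merging or trivializing $U_{11},U_{12}$ makes some of them collapse to trivialities and permutes others, so what survives is a subset (up to relabeling) of the previously established constraints. The conclusion is unaffected, but the cross-check should be stated as a subset/permutation argument rather than an identity of inequalities.
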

The corollary states that the Han--Kobayashi inner bound is optimal when encoding is restricted to randomly generated codebooks, superposition coding, and coded time sharing. It cannot be enlarged by replacing the decoders used in the proof of~\eqref{eq:R_HK1} with optimal decoders.

\begin{proof}[Proof of Corollary~\ref{thm:HK_OptimalReceiver}]
	Applying Theorem~\ref{thm:KLdmic} to the definition of $\Rr_\text{opt}$ yields
	\begin{align*}
		\Rr_\text{opt} &= \ReduceA\left(\bigcup_{p}  \Rr_1(p) \cap \Rr_2(p) \right), 
	\end{align*}
	where $\Rr_1(p)$ is the set of rate tuples $(R_{11},R_{12},R_{21},R_{22})$ such that
	\begin{align}
		R_{\Tc_1} &\le I(U_{\Tc_1}; Y_1 \cond U_{\Sc_1\setminus\Tc_1}, Q) \quad \text{for all $\Tc_1 \subseteq \Sc_1$} \label{eq:R_opt1}
	\end{align}
	for some $\Sc_1$ with $\{11,12\} \subseteq \Sc_1 \subseteq \{11,12,21,22\}$. 
	Likewise, $\Rr_2(p)$ is the set of rate tuples that satisfy 
	\begin{align}
		R_{\Tc_2} &\le I(U_{\Tc_2}; Y_2 \cond U_{\Sc_2\setminus\Tc_2}, Q) \quad \text{for all $\Tc_2 \subseteq \Sc_2$} \label{eq:R_opt2}
	\end{align}
	for some $\Sc_2$ with $\{21,22\} \subseteq \Sc_2 \subseteq \{11,12,21,22\}$. Here, $\Sc_1$ and $\Sc_2$ contain the indices of the messages recovered by receivers~1 and~2, respectively.

	In order to compare $\Rr_\text{opt}$ to $\Rr_\text{HK}$, recall~\eqref{eq:R_HK1} and~\eqref{eq:R_HK} and the compact description of $\Rr_\text{HK}$ in~\cite{ChongMotani08} as the set of all rate pairs $(R_1,R_2)$ such that
	{\allowdisplaybreaks%
	\begin{subequations}
	\begin{align}
		R_1 &\le I(U_{11},U_{12};Y_1 \cond U_{21},Q), \\
		R_2 &\le I(U_{21},U_{22};Y_2 \cond U_{12},Q), \\
		R_1+R_2 &\le I(U_{11},U_{12},U_{21}; Y_1 \cond Q) 
			+ I(U_{22}; Y_2 \cond U_{12},U_{21},Q), \\
		R_1+R_2 &\le  I(U_{12},U_{21},U_{22}; Y_2\cond Q) 
			+ I(U_{11}; Y_1 \cond U_{12},U_{21},Q), \\
		R_1+R_2 &\le I(U_{11},U_{21}; Y_1 \cond U_{12},Q) 
			+ I(U_{12},U_{22}; Y_2\cond U_{21},Q), \\
		2R_1 + R_2 &\le I(U_{11},U_{12},U_{21}; Y_1 \cond Q) 
			+ I(U_{11};Y_1 \cond U_{12},U_{21},Q) 
			+ I(U_{12},U_{22}; Y_2 \cond U_{21},Q), \\
		R_1+2R_2 &\le I(U_{12},U_{21},U_{22}; Y_2\cond Q) 
			+I(U_{22}; Y_2\cond U_{12},U_{21},Q) 
			+  I(U_{11}, U_{21}; Y_1 \cond U_{12},Q)
	\end{align}
	\label{eq:R_HK_compact}%
	\end{subequations}}%
	for some pmf of the form $p = p(q)\inpmfspace p(u_{11}|q) \inpmfspace p(u_{12}|q) \inpmfspace p(u_{21}|q) \inpmfspace p(u_{22}|q) \inpmfspace p(x_1| u_{11}, u_{12}) \inpmfspace p(x_2 | u_{21}, u_{22})$, where the latter two factors represent deterministic mappings $x_1(u_{11},u_{12})$ and $x_2(u_{21},u_{22})$.

	It is easy to see that $\Rr_\text{HK} \subseteq \Rr_\text{opt}$. Choosing $\Sc_1=\{11,12,21\}$ in~\eqref{eq:R_opt1}, the resulting conditions coincide with the ones in~\eqref{eq:R_HK1}, and the constituent sets satisfy the condition $\Rr_{\text{HK},1}(p) \subseteq \Rr_1(p)$. Likewise, choosing $\Sc_2=\{12,21,22\}$ in~\eqref{eq:R_opt2}, $\Rr_{\text{HK},2}(p) \subseteq \Rr_2(p)$, and the desired inclusion follows.
	
	To show that $\Rr_\text{opt} \subseteq \Rr_\text{HK}$, note that conditions~\eqref{eq:R_opt1} and~\eqref{eq:R_opt2} must hold for some $\Sc_1 \supseteq \{11,12\}$ and $\Sc_2 \supseteq \{21,22\}$. For each of the 16 possible choices of $\Sc_1$ and $\Sc_2$, the resulting rate region is (directly or indirectly) included in $\Rr_\text{HK}$ as follows (see Figure~\ref{fig:HKcases}).
	\begin{figure}[h]
		\centering
		\includesvg{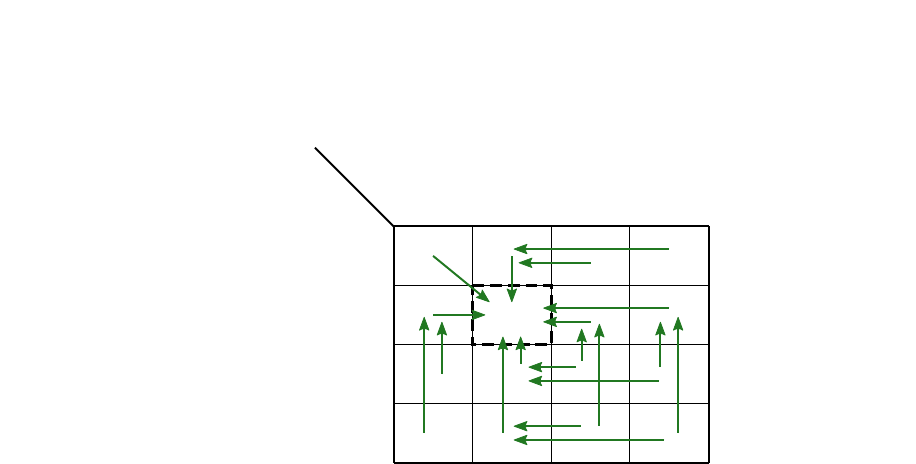}
		\caption{Different cases of $\Sc_1$ and $\Sc_2$ for the region $\Rr_\text{opt}$ and the inclusion of the corresponding regions in $\Rr_\text{HK}$. An arrow from A to B means that the region achieved by case A is included in the region achieved by case B.}
		\label{fig:HKcases}
	\end{figure}
	\begin{itemize}
		\item If $\Sc_1=\{11,12,21\}$ and $\Sc_2=\{21,22,12\}$, we obtain precisely $\Rr_\text{HK}$ (depicted as a dashed box in the figure).
		
		\item If $\Sc_1 = \{11,12,21,22\}$, both receivers decode for the messages with indices $\{21,22\}$. This is equivalent to letting $U'_{21} = (U_{21},U_{22})$, $U'_{22}=\emptyset$, and  $\Sc_1'=\{11,12,21\}$. A symmetric argument holds if $\Sc_2 = \{21,22,11,12\}$.
		
		\item If $\Sc_1 = \{11,12,22\}$, then $\Sc_1$ can be replaced by $\{11,12,21\}$ by exchanging the roles of $U_{21}$ and $U_{22}$. The exchange will not affect receiver~2, since the two auxiliary random variables play symmetric roles there. A symmetric argument holds if $\Sc_2 =\{21,22,11\}$.
		
		\item If $\Sc_1=\{11,12\}$ and $\Sc_2=\{21,22\}$, we apply Fourier--Motzkin elimination and arrive at 
		\begin{align*}
			R_1 &\le I(X_1; Y_1 \cond Q), \\
			R_2 &\le I(X_2; Y_2 \cond Q).
		\end{align*}
		This region is a subset of the one in~\eqref{eq:R_HK_compact} when the latter is specialized to $U_{12}=U_{21}=\emptyset$, $U_{11}=X_1$, and $U_{22}=X_2$.
		
		\item If $\Sc_1=\{11,12\}$ and $\Sc_2=\{21,22,12\}$, Fourier--Motzkin elimination leads to 
		\begin{align*}
			R_1 &\le I(X_1; Y_1\cond Q), \\
			R_1 &\le I(X_1; Y_1\cond U_{12}, Q) + I(U_{12}; Y_2 \cond X_2, Q), \\
			R_2 &\le I(X_2; Y_2 \cond U_{12}, Q), \\
			R_1+R_2 &\le I(X_1; Y_1 \cond U_{12}, Q) + I(U_{12}, X_2; Y_2 \cond Q).
		\end{align*}
		Again, this region is a subset of the one in~\eqref{eq:R_HK_compact}, namely when the latter is specialized to $U_{21}=\emptyset$ and $U_{22}=X_2$. A symmetric argument holds if $\Sc_1=\{11,12,21\}$ and $\Sc_2=\{21,22\}$.
	\end{itemize}%
	This concludes the proof of Corollary~\ref{thm:HK_OptimalReceiver}.
\end{proof}

\section{Concluding remarks} \label{sec:conclusion}
Taking a modular approach to the problem of finding the capacity region of the interference network, we have studied the performance of random code ensembles. This result provides a simple characterization of the rate region achievable by the optimal maximum likelihood decoding rule and invites more refined studies on the performance of random coding for interference networks, such as the error exponent analysis (cf.\@ \cite{Haroutunian1975, Nazari2010}) and Verd\'u's finite-block performance bounds~\cite{Verdu2012}.

The optimal rate region can be achieved by simultaneous nonunique decoding, which can be useful in other coding schemes such as Marton coding for broadcast channels~\cite{Marton1979} and noisy network coding for relay networks~\cite{Lim2011}. Although its performance can be achieved  also by an appropriate combination of simultaneous decoding (SD) of strong interference and treating weak interference as noise (IAN) \cite{Motahari2011, Baccelli2011, Bidokhti2012}, simultaneous nonunique decoding provides a conceptual unification of SD and IAN, recovering all possible combinations of the two schemes at each receiver. Indeed, as with ``the one ring to rule them all'' \cite{Tolkien1954}, simultaneous nonunique decoding is the one rule that includes them all.

\section*{Acknowledgments}
The authors are grateful to Gerhard Kramer for an interesting conversation on optimal decoding rules that spurred interest in this research direction. They also would like to thank Jungwon Lee for enlightening discussions on minimum distance decoding
for interference channels, which shaped the main ideas behind this paper.

\appendices
\section{Proof of Lemma~\ref{thm:2MAC_singleLetterizeEntropy}}
\label{sec:proof_2MAC_singleLetterizeEntropy}
	Clearly, the right hand side of the equality is an upper bound to the left hand side, since
	\begin{align*}
		H(Y_1^n \cond X_1^n, \Cc_n) &\le nH(Y_1 \cond X_1, Q),
	\end{align*}
	and
	\begin{align*}
		H(Y_1^n \cond X_1^n, \Cc_n) &\le H(Y_1^n, M_2 \cond X_1^n, \Cc_n) \\
		&= nR_2 + H(Y_1^n \cond X_1^n, X_2^n, \Cc_n) \\
		&\le nR_2 + nH(Y_1 \cond X_1,X_2,Q),
	\end{align*}
	where we have used the codebook structure and the fact that the channel is memoryless.

	To see that the right hand side is also a valid lower bound, note that
	\begin{align*}
		H(Y_1^n\cond X_1^n,\Cc_n) 
		&= \underbrace{H(Y_1^n\cond X_1^n, \Cc_n, M_2)}_{\substack{= nH(Y_1|X_1,X_2) \\ = nH(Y_1| X_1,X_2,Q)}} + \underbrace{H(M_2)}_{=nR_2} - H(M_2\cond X_1^n, \Cc_n, Y_1^n).
	\end{align*}
	Next, we find an upper bound on $H(M_2\cond X_1^n, \Cc_n, Y_1^n)$ by showing that given $X_1^n$, $\Cc_n$, and $Y_1^n$, a relatively short list $\Lc \subseteq \natSet{2^{nR_2}}$ can be constructed that contains $M_2$ with high probability (the idea is similar to the proof of Lemma~22.1 in~\cite{ElGamalKim}). Without loss of generality, assume $M_2=1$.
	Fix an $\eps>0$ and define the random set
	\begin{align*}
		\Lc &= \{m_2: (Q^n, X_1^n, X_2^n(m_2), Y_1^n) \in \aep \}.
	\end{align*}
	To analyze the cardinality $\card{\Lc}$, note that, for each $m_2\neq 1$,
	{\allowdisplaybreaks 
	\begin{align*}
		\P\{ (Q^n, X_1^n, X_2^n(m_2), Y_1^n) \in \aep \}  
		&= \sum_{q^n,x_1^n,x_2^n} \P\{Q^n=q^n, X_1^n=x_1^n, X_2^n(m_2)=x_2^n\} 
		\, \P\{(x_1^n, x_2^n, Y_1^n) \in \aep\} \\
		&\annleq{a} \sum_{q^n,x_1^n,x_2^n} \P\{Q^n=q^n, X_1^n=x_1^n, X_2^n(m_2)=x_2^n\} 
		\ 2^{-n(I(X_2;Y_1|X_1,Q)-\delta(\eps))} \\[1mm]
		&= 2^{-n(I(X_2;Y_1|X_1,Q)-\delta(\eps))},
	\end{align*}}%
	where (a) follows by the joint typicality lemma. Thus, the cardinality $\card{\Lc}$ satisfies $\card{\Lc} \le 1 + B$, where $B$ is a binomial random variable with $2^{nR_2}-1$ trials and success probability at most $2^{-n(I(X_2;Y_1| X_1,Q)-\delta(\eps))}$. The expected cardinality is therefore bounded as
	\begin{align}
		\E(\card{\Lc}) &\le 1 + 2^{n(R_2-I(X_2;Y_1| X_1,Q)+\delta(\eps))}.
		\label{eq:listCardBound}
	\end{align}
	Note that the true $M_2$ is contained in the list with high probability, i.e., $1\in\Lc$, by the weak law of large numbers, 
	\begin{align*}
		& \P\{ (Q^n,X_1^n, X_2^n(1), Y_1^n) \in \aep \} \to 1\quad \text{as }n\to\infty.
	\end{align*}
	Define the indicator random variable $E = \mathbb I(1 \in \Lc)$, which therefore satisfies $\P\{E=0\} \to 0 $ as $n\to\infty$. Hence
	\begin{align*}
		H(M_2\cond X_1^n,\Cc_n,Y_1^n) 
		&= H(M_2\cond X_1^n,\Cc_n,Y_1^n,E) + I(M_2;E\cond X_1^n,\Cc_n,Y_1^n) \\
		&\le H(M_2\cond X_1^n,\Cc_n,Y_1^n,E) + 1 \\
		&= 1+ \P\{E=0\}\cdot H(M_2\cond X_1^n,\Cc_n,Y_1^n,E=0) \\
		&\qquad + \P\{E=1\}\cdot H(M_2\cond X_1^n,\Cc_n,Y_1^n,E=1) \\
		&\le 1+ nR_2 \P\{E=0\} + H(M_2\cond X_1^n,\Cc_n,Y_1^n,E=1).
	\end{align*}
	For the last term, we argue that if $M_2$ is included in $\Lc$, then its conditional entropy cannot exceed $\log(\card{\Lc})$:
	{\allowdisplaybreaks
	\begin{align*}
		H(M_2\cond X_1^n,\Cc_n,Y_1^n,E=1) 
		&\anneq{a} H(M_2\cond X_1^n,\Cc_n,Y_1^n,E=1,\Lc,\card{\Lc}) \\
		&\le H(M_2\cond E=1,\Lc,\card{\Lc}) \\
		&= \sum_{l=0}^{2^{nR_2}} \P\{\card{\Lc}=l\} \cdot H(M_2\cond E=1,\Lc,\card{\Lc}=l) \\
		&\le \sum_{l=0}^{2^{nR_2}} \P\{\card{\Lc}=l\} \cdot \log(l) \\
		&= \E (\log(\card{\Lc})) \\
		&\annleq{b} \log( \E (\card{\Lc}) ) \\
		&\annleq{c} 1 + \max\{ 0, n(R_2-I(X_2;Y_1| X_1,Q)+\delta(\eps)) \},
	\end{align*}}%
	where (a) follows since the list $\Lc$ and its cardinality $\card{\Lc}$ are functions only of $X_1^n$, $\Cc_n$, and $Y_1^n$, (b) follows by Jensen's inequality, and (c) follows from~\eqref{eq:listCardBound} and the soft-max interpretation of the log-sum-exp function~\cite[p.\,72]{BoydVandenberghe}.
	
	Substituting back, we have
	\begin{align*}
		H(M_2\cond X_1^n, \Cc_n, Y_1^n) 
		&\le 2+ nR_2\P\{E=0\} + \max\{0, n(R_2-I(X_2;Y_1| X_1,Q)+\delta(\eps))\},
	\end{align*}
	and
	\begin{align*}
		\sfrac 1 n H(Y_1^n\cond X_1^n, \Cc_n) 
 		&
		\ge H(Y_1|X_1,X_2,Q) + R_2 -\sfrac 2 n - R_2\P\{E=0\} - \max\{0, R_2 - I(X_2;Y_1| X_1,Q) + \delta(\eps) \} \\
 		&
		\ge H(Y_1|X_1,X_2,Q) + \min\{R_2, I(X_2;Y_1| X_1,Q) - \delta(\eps) \} -\sfrac 2 n - R_2\P\{E=0\} .
	\end{align*}
	Taking the limit as $n\to\infty$, and noting that we are free to choose $\eps$ such that $\delta(\eps)$ becomes arbitrarily small, the desired result follows.

\section{Equivalence between the Min and MAC Forms}
\label{sec:KLdmic_minMacFormEquivalence}
Fix a distribution $p=p(q)\inpmfspace p(x_1|q)\cdots p(x_K|q)$ and a rate tuple $(R_1,\dots,R_K)$. We show that the conditions~\eqref{eq:KLdmic_R1macformCond} and~\eqref{eq:KLdmic_R1minformCond} are equivalent.

\begin{proof}[Proof that (\ref{eq:KLdmic_R1macformCond}) implies (\ref{eq:KLdmic_R1minformCond})]
We are given a set $\Sc$ with $\Dc_1 \subseteq \Sc \subseteq \natSet K$. Fix an arbitrary $\Vc$ with nonempty intersection $\Vc \cap \Dc_1$. Now consider $\Vc' = \Tc = \Sc \cap \Vc$. Note $\Vc'\cap\Dc_1 = \Vc\cap\Dc_1$ as required. Then, 
\begin{align*}
	R_{\Vc'} = R_{\Tc} 
	&\annleq{a} I(X_{\Tc}; Y_1 \cond X_{\Sc\setminus\Tc}, Q) \\
	&\annleq{b} I(X_{\Tc}; Y_1 \cond X_{\Sc\setminus\Vc}, X_{\natSet K \setminus \Sc\setminus\Vc}, Q) \\
	&= I(X_{\Vc'}; Y_1 \cond X_{\natSet K \setminus \Vc}, Q),
\end{align*}
where (a) follows from~\eqref{eq:KLdmic_R1macformCond}, and (b) follows from the structure of $p$.
\end{proof}

\begin{proof}[Proof that (\ref{eq:KLdmic_R1minformCond}) implies (\ref{eq:KLdmic_R1macformCond})]
Denote a set $\Sc \subseteq \natSet K$ as \emph{decodable} if 
\begin{align*}
	\forall \Tc \subseteq \Sc: \quad R_{\Tc} &\leq I(X_{\Tc}; Y_1 \cond X_{\Sc\setminus\Tc}, Q).
\end{align*}
Then the following proposition holds, which is proved below.
\begin{proposition} \label{thm:KLdmic_union}  
	If $\Sc_1$ and $\Sc_2$ are decodable sets, then $\Sc_1 \cup \Sc_2$ is a decodable set.
\end{proposition}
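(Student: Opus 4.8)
The plan is to reduce this combinatorial claim to the submodularity of a single set function. For $\Ac \subseteq \natSet K$ write $R_{\Ac} = \sum_{k \in \Ac} R_k$ and set $F(\Ac) = R_{\Ac} + H(Y_1 \cond X_{\Ac}, Q)$, all quantities evaluated under the fixed pmf $p$. The first step is a reformulation of decodability. Since $I(X_{\Tc}; Y_1 \cond X_{\Sc \setminus \Tc}, Q) = H(Y_1 \cond X_{\Sc \setminus \Tc}, Q) - H(Y_1 \cond X_{\Sc}, Q)$, the defining inequality $R_{\Tc} \le I(X_{\Tc}; Y_1 \cond X_{\Sc \setminus \Tc}, Q)$ becomes, after adding $R_{\Sc \setminus \Tc}$ to both sides, the inequality $F(\Sc) \le F(\Sc \setminus \Tc)$. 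Letting $\Tc$ range over all subsets of $\Sc$, we conclude that $\Sc$ is decodable if and only if $F(\Sc) \le F(\Bc)$ for every $\Bc \subseteq \Sc$; that is, $\Sc$ minimizes $F$ over its own power set.

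Next I would show that $F$ is submodular on $2^{\natSet K}$, namely $F(\Ac) + F(\Bc) \ge F(\Ac \cup \Bc) + F(\Ac \cap \Bc)$. Since $\Ac \mapsto R_{\Ac}$ is modular, it suffices to prove that $\Ac \mapsto H(Y_1 \cond X_{\Ac}, Q)$ is submodular. Rewriting the target inequality as a comparison of conditional-entropy differences, this reduces to $I(X_{\Bc \setminus \Ac}; Y_1 \cond X_{\Ac}, Q) \ge I(X_{\Bc \setminus \Ac}; Y_1 \cond X_{\Ac \cap \Bc}, Q)$, i.e.\ to the claim that enlarging the conditioning from $X_{\Ac \cap \Bc}$ to $X_{\Ac}$ (which adds $X_{\Ac \setminus \Bc}$) does not decrease the information. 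This is exactly where the structure of the random code ensemble is used: because the codewords $X_k^n$ are generated conditionally independently given $Q^n$, the added variables $X_{\Ac \setminus \Bc}$ are conditionally independent of $X_{\Bc \setminus \Ac}$ given $(X_{\Ac \cap \Bc}, Q)$, and conditioning on such independent side information can only increase mutual information — concretely, $I(U; Y_1 \cond V, W) - I(U; Y_1 \cond W) = I(U; V \cond Y_1, W) - I(U; V \cond W) \ge 0$ whenever $I(U; V \cond W) = 0$. I expect this submodularity step, together with the realization that it hinges essentially on the conditional independence of the codebooks (it would fail for arbitrarily correlated codebook generation), to be the conceptual core of the argument; the rest is bookkeeping.

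With the reformulation and submodularity in hand, the proposition follows by a short lattice argument. Fix any $\Bc \subseteq \Sc_1 \cup \Sc_2$. Submodularity applied to $\Bc$ and $\Sc_1$ gives $F(\Bc) + F(\Sc_1) \ge F(\Bc \cup \Sc_1) + F(\Bc \cap \Sc_1)$, and since $\Bc \cap \Sc_1 \subseteq \Sc_1$ with $\Sc_1$ decodable we have $F(\Bc \cap \Sc_1) \ge F(\Sc_1)$, whence $F(\Bc) \ge F(\Bc \cup \Sc_1)$. Put $\Cc = \Bc \cup \Sc_1$, so that $\Sc_1 \subseteq \Cc \subseteq \Sc_1 \cup \Sc_2$. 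Submodularity applied to $\Cc$ and $\Sc_2$ gives $F(\Cc) + F(\Sc_2) \ge F(\Cc \cup \Sc_2) + F(\Cc \cap \Sc_2)$; here $\Cc \cup \Sc_2 = \Sc_1 \cup \Sc_2$ because $\Bc \subseteq \Sc_1 \cup \Sc_2$, and $F(\Cc \cap \Sc_2) \ge F(\Sc_2)$ since $\Cc \cap \Sc_2 \subseteq \Sc_2$ with $\Sc_2$ decodable, so $F(\Cc) \ge F(\Sc_1 \cup \Sc_2)$. Chaining the two bounds yields $F(\Bc) \ge F(\Sc_1 \cup \Sc_2)$ for every $\Bc \subseteq \Sc_1 \cup \Sc_2$, which by the reformulation says precisely that $\Sc_1 \cup \Sc_2$ is decodable. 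Applied repeatedly, this also gives a unique maximal decodable subset of $\natSet K$, which is the ingredient needed to identify the min and MAC forms.
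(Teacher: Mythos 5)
Your proof is correct, and it takes a genuinely different route from the paper's. The paper partitions the test set $\Tc'' \subseteq \Sc_1 \cup \Sc_2$ into $\Tc''_1 = \Tc'' \cap \Sc_1$ and $\Tc''_2 = \Tc'' \setminus \Sc_1 \subseteq \Sc_2$, invokes the decodability bound of $\Sc_1$ on the first piece and of $\Sc_2$ on the second, then enlarges the conditioning in each mutual information term (using conditional independence of the codewords given $Q$) so the two bounds combine via the chain rule into the desired MAC bound for $\Sc_1 \cup \Sc_2$. You instead introduce the potential $F(\Ac) = R_\Ac + H(Y_1 \cond X_\Ac, Q)$, reformulate decodability of $\Sc$ as the statement that $\Sc$ minimizes $F$ over its own power set, prove $F$ is submodular, and derive closure under union from the lattice structure of minimizers of a submodular function. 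The two arguments hinge on the same ingredient — adding conditioning on variables that are conditionally independent given $Q$ cannot decrease mutual information, a consequence of $p(x_1,\dots,x_K \cond q) = \prod_k p(x_k \cond q)$ — and both your reformulation and your submodularity step check out. The paper's proof is shorter and more elementary; your approach is more structural, making Proposition~\ref{thm:KLdmic_union} and the existence of the unique maximal decodable set $\Scs$ standard facts about submodular minimizers rather than ad hoc calculations, and it more directly exposes why the min and MAC forms coincide.
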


To determine which messages are decodable, consider the optimization problem of maximizing $\card{\Sc}$ over decodable sets $\Sc$. From Proposition~\ref{thm:KLdmic_union}, a unique maximizer $\Scs$ must exist, which is a superset of all decodable sets. Consider its complement $\Scn$. The intuitive reason for the messages indexed by $\Scn$ being \emph{un}decodable is that the corresponding rates are too large. This notion is made precise in the following proposition, which is analogous to a property for the Gaussian case given in~\cite[Fact 1]{Baccelli2011} and for which a proof is provided below.
\begin{proposition} \label{thm:KLdmic_undecodableRates}
	For all sets $\Uc$ with $\emptyset \subset \Uc \subseteq \Scn$, the rates satisfy
	\begin{align}
		R_\Uc &> I(X_\Uc; Y_1 \cond X_{\Scs},Q).
		\label{eq:KLdmic_buildContradiction}
	\end{align}
\end{proposition}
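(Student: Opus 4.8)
The plan is to argue by contradiction from the maximality of $\Scs$. Suppose \eqref{eq:KLdmic_buildContradiction} fails, so that some nonempty $\Uc \subseteq \Scn$ satisfies $R_\Uc \le I(X_\Uc; Y_1 \cond X_{\Scs}, Q)$; among all such subsets I would fix one, call it $\Uc_0$, that is minimal with respect to inclusion. The goal is then to show that $\Scs \cup \Uc_0$ is a decodable set. Since $\Uc_0 \neq \emptyset$ and $\Uc_0 \subseteq \Scn = \natSet K \setminus \Scs$, this set strictly contains $\Scs$, contradicting the fact (established just above the proposition) that $\Scs$ is a superset of every decodable set. Hence no such $\Uc$ can exist, which is the claim.

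To check that $\Scs \cup \Uc_0$ is decodable, I would verify $R_{\Tc} \le I(X_{\Tc}; Y_1 \cond X_{(\Scs \cup \Uc_0)\setminus\Tc}, Q)$ for every $\Tc \subseteq \Scs \cup \Uc_0$, writing $\Tc = A \cup B$ with $A = \Tc \cap \Scs$, $B = \Tc \cap \Uc_0$, and distinguishing the three cases $B = \emptyset$, $B = \Uc_0$, and $\emptyset \ne B \subsetneq \Uc_0$. Two tools are used throughout: the chain rule for mutual information, and the conditional independence of $X_1,\dots,X_K$ given $Q$ in the $p$-distributed ensemble, which yields the monotonicity $I(X_A; Y_1 \cond X_C, X_{B'}, Q) \ge I(X_A; Y_1 \cond X_C, Q)$ for pairwise disjoint $A, C, B'$ (indeed $I(X_A; Y_1 \cond X_C, Q) \le I(X_A; Y_1, X_{B'} \cond X_C, Q) = I(X_A; X_{B'} \cond X_C, Q) + I(X_A; Y_1 \cond X_C, X_{B'}, Q)$, and the first term is zero). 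When $B = \emptyset$ the bound follows from decodability of $\Scs$ applied to $\Tc = A$ together with this monotonicity. When $B = \Uc_0$, the complement within $\Scs \cup \Uc_0$ is $\Scs \setminus A$, and the chain-rule expansion $I(X_A, X_{\Uc_0}; Y_1 \cond X_{\Scs\setminus A}, Q) = I(X_A; Y_1 \cond X_{\Scs\setminus A}, Q) + I(X_{\Uc_0}; Y_1 \cond X_{\Scs}, Q)$ lets me bound the two summands below by $R_A$ (decodability of $\Scs$) and $R_{\Uc_0}$ (the failed inequality for $\Uc_0$). In the remaining case, set $B' = \Uc_0 \setminus B \ne \emptyset$ and $C = \Scs \setminus A$; the chain rule gives $I(X_A, X_B; Y_1 \cond X_C, X_{B'}, Q) = I(X_A; Y_1 \cond X_C, X_{B'}, Q) + I(X_B; Y_1 \cond X_{\Scs}, X_{B'}, Q)$, the first summand is $\ge R_A$ by monotonicity and decodability of $\Scs$, and for the second I would expand $R_B + R_{B'} = R_{\Uc_0} \le I(X_{\Uc_0}; Y_1 \cond X_{\Scs}, Q) = I(X_{B'}; Y_1 \cond X_{\Scs}, Q) + I(X_B; Y_1 \cond X_{\Scs}, X_{B'}, Q)$ to get $I(X_B; Y_1 \cond X_{\Scs}, X_{B'}, Q) \ge R_B + \bigl(R_{B'} - I(X_{B'}; Y_1 \cond X_{\Scs}, Q)\bigr) \ge R_B$.

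The main obstacle, and the only place where minimality of $\Uc_0$ is needed, is this last case: once $B'$ is nonempty, conditioning on $X_{B'}$ perturbs the mutual-information budget available to $X_B$, and the clean bound $I(X_B; Y_1 \cond X_{\Scs}, X_{B'}, Q) \ge R_B$ is recovered only because $R_{B'}$ strictly exceeds its own single-letter bound $I(X_{B'}; Y_1 \cond X_{\Scs}, Q)$ --- which holds precisely because $B'$ is a nonempty proper subset of the inclusion-minimal $\Uc_0$, hence cannot itself violate \eqref{eq:KLdmic_buildContradiction}. The other two cases and the conditional-independence monotonicity are routine bookkeeping. I would also remark that this monotonicity is the single point where the special structure of the random code ensemble enters, and that the statement would be false for arbitrarily correlated inputs $X_1,\dots,X_K$.
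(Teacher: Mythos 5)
Your proof is correct and follows essentially the same route as the paper: both take a minimal violating $\Uc \subseteq \Scn$ and derive a contradiction by showing that $\Scs \cup \Uc$ is decodable, using the chain rule together with the conditional-independence monotonicity and the minimality of $\Uc$ to bound each summand of $R_{\Tc}$. The only cosmetic difference is that the paper folds your three cases into a single auxiliary inequality (valid for all $\Tc \subseteq \Uc$, including $\Tc = \emptyset$ and $\Tc = \Uc$) before recombining, whereas you argue the cases separately.
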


Assuming~\eqref{eq:KLdmic_R1macformCond} is not true, there must be some desired message index that is not decodable, i.e., $\Dc_1 \nsubseteq \Scs$, or equivalently, $ \Scn \cap \Dc_1 \neq \emptyset$. Then we can choose $\Vc = \Scn$ in~\eqref{eq:KLdmic_R1minformCond}, yielding
\begin{align*}
	\exists \Vc' \subseteq \Scn, \Vc'\cap\Dc_1 = \Scn\cap\Dc_1: \quad 
	R_{\Vc'} & \leq I(X_{\Vc'}; Y_1 \cond X_{\Scs},Q), 
\end{align*}
which contradicts~\eqref{eq:KLdmic_buildContradiction}. This proves that (\ref{eq:KLdmic_R1minformCond}) implies (\ref{eq:KLdmic_R1macformCond}).
\end{proof}

\smallskip

\begin{proof}[Proof of Proposition~\ref{thm:KLdmic_union}]
	Since $\Sc_1$ and $\Sc_2$ are decodable, we have 
	\begin{align*}
		R_{\Tc} &\leq I(X_{\Tc}; Y_1 \cond X_{\Sc_1 \setminus\Tc}, Q) \quad \text{ for all $\Tc \subseteq \Sc_1$}, \\
		R_{\Tc'} &\leq I(X_{\Tc'}; Y_1 \cond X_{\Sc_2 \setminus\Tc'}, Q) \quad \text{ for all $\Tc' \subseteq \Sc_2$}.
	\end{align*}
	and we need to show
	\begin{align*}
		R_{\Tc''} &\leq I(X_{\Tc''}; Y_1 \cond X_{(\Sc_1 \cup \Sc_2) \setminus\Tc''}, Q) \quad \text{ for all $\Tc'' \subseteq \Sc_1 \cup \Sc_2$}.
	\end{align*}
	Fix a subset $\Tc'' \subseteq \Sc_1 \cup \Sc_2$ and partition it as $\Tc'' = \Tc''_1 \cup \Tc''_2$ where $\Tc''_1 \subseteq \Sc_1$, $\Tc''_2 \subseteq \Sc_2$, $\Tc''_1 \cap \Tc''_2 = \emptyset$, and $\Tc''_2 \cap \Sc_1 = \emptyset$ (see Figure~\ref{fig:unionIsDecodable}). 

	\begin{figure}[h!]
		\centering
		\includesvg{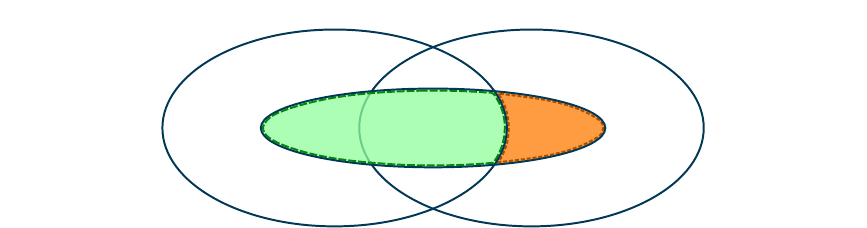}
		\caption{Partitioning the set $\Tc'' \subseteq \Sc_1 \cup \Sc_2$.}
		\label{fig:unionIsDecodable}
	\end{figure}

	Then
	\begin{align*}
		R_{\Tc''} &= R_{\Tc''_1} + R_{\Tc''_2} \\
		&\leq I(X_{\Tc''_1}; Y_1 \cond X_{\Sc_1 \setminus\Tc''_1}, Q) + I(X_{\Tc''_2}; Y_1 \cond X_{\Sc_2 \setminus\Tc''_2}, Q) \\
		&\leq I(X_{\Tc''_1}; Y_1 \cond X_{(\Sc_1 \cup \Sc_2)\setminus\Tc''}, Q) + I(X_{\Tc''_2}; Y_1 \cond X_{(\Sc_1 \cup \Sc_2)\setminus\Tc''}, X_{\Tc''_1},Q) \\
		& = I( X_{\Tc''_1}, X_{\Tc''_2}; Y_1 \cond X_{(\Sc_1 \cup \Sc_2)\setminus\Tc''}, Q),
	\end{align*}
	which concludes the proof.
\end{proof}

\smallskip

\begin{proof}[Proof of Proposition~\ref{thm:KLdmic_undecodableRates}]
	Assume first that the proposition was not true. Then there must be a minimal $\Uc$ with $\emptyset \subset \Uc \subseteq \Scn$ such that 
	\begin{align}
		R_\Uc &\leq I(X_\Uc; Y_1 \cond X_{\Scs},Q), \label{eq:KLdmic_RUviolates} \\
		R_{\Uc\setminus \Tc} &> I(X_{\Uc\setminus \Tc}; Y_1 \cond X_{\Scs},Q) \quad \text{ for all $\Tc$ with $\emptyset \subset \Tc \subset \Uc$}.  \notag
	\end{align}
	Now, 
	\begin{align*}
		R_{\Tc} = R_\Uc - R_{\Uc \setminus \Tc} 
		&\leq I(X_\Uc; Y_1 \cond X_{\Scs},Q) - I(X_{\Uc \setminus \Tc}; Y_1 \cond X_{\Scs},Q) \notag\\
		&= I(X_\Tc; Y_1 \cond X_{\Scs}, X_{\Uc \setminus \Tc}, Q ) \quad \text{ for all $\Tc$ satisfying $\emptyset \subset \Tc \subset \Uc$}.
	\end{align*}
	Recalling~\eqref{eq:KLdmic_RUviolates}, the last statement continues to hold for $\Tc=\Uc$. Thus,
	\begin{align}
		R_\Tc  
		&\leq I(X_\Tc; Y_1 \cond X_{\Scs}, X_{\Uc \setminus \Tc}, Q ) \quad \text{ for all $\Tc \subseteq \Uc$} .
		\label{eq:KLdmic_RUminusRUprime}
	\end{align}
	We are going to show that $\Scs \cup \Uc$ is decodable, which contradicts the definition of $\Scs$ as the maximum decodable set since $\Uc$ is non-empty and does not intersect $\Scs$. To this end, consider an arbitrary $\Tc' \subseteq \Scs \cup \Uc$ and partition it as $\Tc' = \Tc'_1 \cup \Tc'_2$ with $\Tc'_1 \cap \Tc'_2 = \emptyset$, $\Tc'_1 \subseteq \Scs$, and $\Tc'_2 \subseteq \Uc$ (see Figure~\ref{fig:ratesAreTooLarge}). 
	
	\begin{figure}[h!]
		\centering
		\includesvg{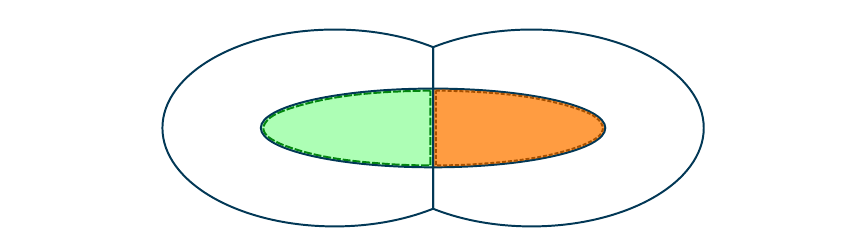}
		\caption{Partitioning the set $\Tc'  \subseteq \Scs \cup \Uc$.}
		\label{fig:ratesAreTooLarge}
	\end{figure}

	Then
	\begin{align*}
		R_{\Tc'} &= R_{\Tc'_1} + R_{\Tc'_2} \\
		&\annleq{a} I(X_{\Tc'_1}; Y_1 \cond X_{\Scs \setminus\Tc'_1}, Q) + I(X_{\Tc'_2}; Y_1 \cond X_{\Scs}, X_{\Uc \setminus\Tc'_2}, Q) \\
		&\annleq{b} I(X_{\Tc'_1}; Y_1 \cond X_{\Scs \setminus\Tc'_1}, X_{\Uc \setminus \Tc'_2},Q) + I(X_{\Tc'_2}; Y_1 \cond X_{\Scs \setminus\Tc'_1}, X_{\Uc \setminus \Tc'_2}, X_{\Tc'_1}, Q) \\
		& = I( X_{\Tc'_1}, X_{\Tc'_2}; Y_1 \cond X_{(\Scs \cup \Uc)\setminus(\Tc'_1 \cup \Tc'_2)}, Q),
	\end{align*}
	where (a) follows from $\Scs$ being decodable and~\eqref{eq:KLdmic_RUminusRUprime}, and in (b), we have augmented the first mutual information expression and rewritten the second one. This concludes the proof by contradiction.
\end{proof}

\section{Proof of Lemma~\ref{thm:KLMAC_singleLetterizeEntropy}}
\label{sec:proof_KLMAC_singleLetterizeEntropy}
	The proof proceeds along similar lines as the proof of Lemma~\ref{thm:2MAC_singleLetterizeEntropy}. 
	First, we show that the right hand side is a valid upper bound to the left hand side. For any $\Uc \subseteq \compl{\Sc}$, 
	\begin{align*}
		H(Y_1^n \cond X_{\Sc}^n, \Cc_n) &\le H(Y_1^n, M_{\Uc} \cond X_{\Sc}^n, \Cc_n) \\
		&= nR_{\Uc} + H(Y_1^n \cond X_{\Sc}^n, X_{\Uc}^n, \Cc_n) \\
		&\le nR_{\Uc} + nH(Y_1 \cond X_{\Sc},X_{\Uc},Q) \\
		&= nR_{\Uc} + nH(Y_1 \cond X_{\natSet K},Q) + I( X_{\compl{(\Sc\cup\Uc)}}; Y_1 \cond X_{\Sc\cup\Uc}, Q),
	\end{align*}
	where we have used the codebook structure.
	
	To see that the right hand side is a valid lower bound to the left hand side, note
	\begin{align*}
		H(Y_1^n\cond X_{\Sc}^n, \Cc_n) 
		&= nH(Y_1| X_{\natSet K},Q) + nR_{\compl{\Sc}} - H(M_{\compl{\Sc}} \cond X_{\Sc}^n, Y_1^n, \Cc_n).
	\end{align*}
	Without loss of generality, assume $M_k=1$, for $k\in \compl{\Sc}$. 
	Fix an $\eps>0$ and define the random set
	\begin{align*}
		\Lc &= \{m_{\compl{\Sc}}: (Q^n, X_i^n\vert_{i\in \Dc_1}, X_i^n(m_i)\vert_{i \in \compl{\Dc_1}}, Y_1^n) \in \aep \text{ with $m_k=1$ for all $k\in\compl{\Dc_1}\cap\Sc$}\}.
	\end{align*}
	To analyze the cardinality $\card{\Lc}$, fix a $m_{\compl{\Sc}}$ and consider the probability of $m_{\compl{\Sc}} \in \Lc$. 
	If $m_k\neq 1$ for all $k\in \compl{\Sc}$, and $m_k=1$ otherwise, then the joint typicality lemma implies
	\begin{align*}
		\P\{ (Q^n, X_i^n\vert_{i\in \Dc_1}, X_i^n(m_i)\vert_{i \in \compl{\Dc_1}}, Y_1^n) \in \aep \} 
		&\le 2^{-n(I(X_{\compl{\Sc}};Y_1|X_{\Sc},Q)-\delta(\eps))},
	\end{align*}
	and there are at most $2^{nR_{\compl{\Sc}}}$ such $m_{\compl{\Sc}}$. 
	More generally, fix a subset $\Uc \subseteq \compl{\Sc}$. If $m_k \neq 1$ for $k\in \compl{\Sc}\setminus\Uc$, and $m_k = 1$ otherwise, then
	\begin{align*}
		\P\{ (Q^n, X_i^n\vert_{i\in \Dc_1}, X_i^n(m_i)\vert_{i \in \compl{\Dc_1}}, Y_1^n) \in \aep \}  
		&\le 2^{-n(I(X_{\compl{\Sc}\setminus\Uc};Y_1|X_{\Sc},X_{\Uc},Q)-\delta(\eps))},
	\end{align*}
	and there are at most $2^{nR_{\compl{\Sc}\setminus\Uc}}$ such $m_{\compl{\Sc}}$.
	Thus,
	\begin{align}
		\E(\card{\Lc}) &\le \sum_{\Uc \subseteq \compl{\Sc}} 2^{n(R_{\compl{\Sc}\setminus\Uc}-I(X_{\compl{\Sc}\setminus\Uc};Y_1|X_{\Sc},X_{\Uc},Q)+\delta(\eps))}.
		\label{eq:KL_prop2_listCardBound}
	\end{align}
	Define the indicator random variable $E = \mathbb I((1,1,\dots,1) \in \Lc)$, which satisfies $\P\{E=0\} \to 0 $ as $n\to\infty$ by the weak law of large numbers. Now
	\begin{align*}
		H(M_{\compl{\Sc}} \cond X_{\Sc}^n, Y_1^n, \Cc_n)
		&\le 1+ nR_{\compl{\Sc}} \P\{E=0\} + H(M_{\compl{\Sc}} \cond X_{\Sc}^n, Y_1^n, \Cc_n,E=1).
	\end{align*}
	For the last term, we argue 
	\begin{align*}
		H(M_{\compl{\Sc}} \cond X_{\Sc}^n, Y_1^n, \Cc_n,E=1) 
		&\le \log( \E (\card{\Lc}) ) \\
		&\annleq{\ref{eq:KL_prop2_listCardBound}} \log\left(\sum_{\Uc \subseteq \compl{\Sc}} 2^{n(R_{\compl{\Sc}\setminus\Uc}-I(X_{\compl{\Sc}\setminus\Uc};Y_1|X_{\Sc},X_{\Uc},Q)+\delta(\eps))} \right) \\
		&\le \max_{\Uc \subseteq \compl{\Sc}} \bigl( n(R_{\compl{\Sc}\setminus\Uc}-I(X_{\compl{\Sc}\setminus\Uc};Y_1|X_{\Sc},X_{\Uc},Q)+\delta(\eps)) \bigr) + \card{\compl{\Sc}}. 
	\end{align*}
	Substituting back, 
	\begin{align*}
		H(M_{\compl{\Sc}} \cond X_{\Sc}^n, Y_1^n, \Cc_n) 
		&\le 1 + \card{\compl{\Sc}} + nR_{\compl{\Sc}} \P\{E=0\} + \max_{\Uc \subseteq \compl{\Sc}} \bigl( n(R_{\compl{\Sc}\setminus\Uc}-I(X_{\compl{\Sc}\setminus\Uc};Y_1|X_{\Sc},X_{\Uc},Q)+\delta(\eps)) \bigr) ,
	\end{align*}
	and finally,
	\begin{align*}
		\sfrac 1 n H(Y_1^n\cond X_{\Sc}^n, \Cc_n) 
		&\ge H(Y_1| X_{\natSet K},Q) + R_{\compl{\Sc}} - \sfrac {1 + \card{\compl{\Sc}}}{n} - R_{\compl{\Sc}} \P\{E=0\} \\
		&\qquad - \max_{\Uc \subseteq \compl{\Sc}} \bigl( R_{\compl{\Sc}\setminus\Uc}-I(X_{\compl{\Sc}\setminus\Uc};Y_1|X_{\Sc},X_{\Uc},Q)+\delta(\eps) \bigr) \\
		&= H(Y_1| X_{\natSet K},Q) - \sfrac {1 + \card{\compl{\Sc}}}{n} - R_{\compl{\Sc}} \P\{E=0\} \\
		&\qquad + \min_{\Uc \subseteq \compl{\Sc}} \bigl( R_{\Uc}+ I(X_{\compl{\Sc}\setminus\Uc};Y_1|X_{\Sc},X_{\Uc},Q)+\delta(\eps) \bigr).
	\end{align*}
	Taking the limits $n\to\infty$ and $\eps \to 0$ concludes the proof.

\bibliographystyle{IEEEtranS}
\bibliography{IEEEabrv,references-unified,myPublications}

\end{document}